 \let\NAT@parse\undefined
\renewcommand{\vec}[1]{\mathbf #1}
\newcommand{\mat}[1]{\mathbf #1}
\newcommand{\transpose}{^\mathrm{T}}
\newcommand{\CC}{\mathbb C}
\newtheorem{theorem}{Theorem}
\begin{document}

% paper title
\title{ Modeling of Reverberant Radio Channels  Using Propagation Graphs}
\author{\normalsize\authorblockN{Troels Pedersen, Gerhard Steinb\"ock, and
    Bernard H. Fleury}%
  \thanks{Department of Electronic Systems, Aalborg University,
    DK-9220 Aalborg East, Denmark. Email:
    \{troels,gs,fleury\}@es.aau.dk.   } }
\maketitle
\begin{abstract}
  % The spatially averaged channel impulse response for in-room
  % scenarios exhibits an avalanche effect:
  In measurements of in-room radio channel responses an avalanche
  effect can be observed: earliest signal components, which appear
  well separated in delay, are followed by an avalanche of components
  arriving with increasing rate of occurrence, gradually merging into
  a diffuse tail with exponentially decaying power. We model the
  channel as a propagation graph in which vertices represent
  transmitters, receivers, and scatterers, while edges represent
  propagation conditions between vertices. The recursive structure of
  the graph accounts for the exponential power decay and the avalanche
  effect.  We derive a closed form expression for the graph's transfer
  matrix. This expression is valid for any number of interactions
  and is straightforward to use in numerical simulations.  We discuss
  an example where time dispersion occurs only due to propagation in
  between vertices. Numerical experiments reveal that the graph's
  recursive structure yields both an exponential power decay and an
  avalanche effect.% in the generated impulse responses and delay-power
  %spectra.
\end{abstract}
%\IEEEpeerreviewmaketitle
\vspace{-1ex}

\section{Introduction}
Engineering of modern indoor radio systems for communications and
geolocation relies heavily on models for the time dispersion of the
wideband and ultrawideband radio channels
\cite{hashemi,Molkdar1991,Alsindi2009}.  From measurement data, as
exemplified in Fig.~\ref{fig:kunischExample}, it appears that the
spatial average of the channel impulse response's squared magnitude
(referred to as the delay-power spectrum) observed in in-room
scenarios exhibits an avalanche effect: The earliest signal
components, which appear well separated in time, are followed by an
avalanche of components arriving with increasing rate of occurrence,
gradually merging into a diffuse tail with exponentially decaying
power.
\begin{figure}
  \centering
  \includegraphics[width=0.45\linewidth]{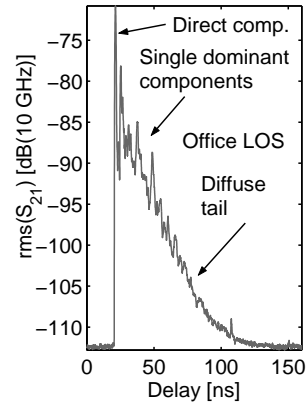}
  \caption{Spatially averaged delay-power spectrum measured within an
    office of 5$\times$5$\times$2.6\,\cubic\meter obtained as the rms
    value of impulse response for 30$\times$30 receiver positions on a
    square horizontal grid with steps of 1\,\centi\meter{}. Abscissa
    includes cable delays; reference to signal bandwidth
    (10\,\giga\hertz) shifts ordinate by --100\,\deci\bel. Reprinted
    from \cite{Kunisch2002} with permission (\copyright\ 2002 IEEE).
  }
% the transmitter-receiver distance is    about 3.8\,\meter.  
  \label{fig:kunischExample}
\end{figure} { The diffuse tail is often referred to as a ``dense
  multipath component'' and is commonly attributed to diffuse
  scattering from rough surfaces and objects which are small compared
  to the wavelength \cite{Poutanen2011}.}
A similar avalanche effect is well-known in room acoustics
\cite{Kuttruff2000} where it is attributed to recursive scattering of
sound waves.  { Indoor radio
  propagation environments are particularly exposed to recursive
  scattering as electromagnetic waves may be reflected back and forth
  in between walls, floor, and ceiling. Thus, the avalanche effect and
  the exponential power decay may occur due to recursive scattering
  rather than diffuse scattering.}

% Recursive scattering makes the arrival rate of
%   signal components gradually increase with time-delay.

Recursive scattering phenomena have been previously taken into account in a
number of radio channel models. The works
\cite{Holloway1999,Rudd2003,Rudd2007,Andersen2007} use the analogy to
acoustical reverberation theory to predict the exponential decay.  As
a matter of fact, there exists a well-developed theory of
electromagnetic fields in cavities \cite{Lehman1993,Hill2009}, but in
this context too the avalanche effect has received little
attention. Recursive scattering between particles in a homogeneous
medium is a well-known phenomenon studied by Foldy \cite{Foldy1945}
and Lax \cite{Lax1951,Lax1952}.  The solution, given by the so-called
Foldy-Lax equation \cite{Mishchenko2006}, has been applied in the
context of time-reversal imaging by Shi and Nehorai
\cite{Shi2007a}. The solution is, however, intractable for
heterogeneous indoor environments. In \cite{Franceschetti2004a} the
radio propagation mechanism is modeled as a ``stream of photons''
performing continuous random walks in a homogeneously cluttered
environment. The model predicts a delay-power spectrum consisting of a
single directly propagating ``coherent component'' followed by an
incoherent tail.  Time-dependent radiosity
\cite{Nosal2004,Hodgson2006,Siltanen2007,Muehleisen2009} accounting
for delay dispersion has been recently applied to design a model for
the received instantaneous power \cite{Rougeron2002}. Thereby, the
exponential power decay and the avalanche effect can be predicted.

Simulation studies of communication and localization systems commonly
rely on synthetic realizations of the channel impulse response. A
multitude of impulse response models exist
\cite{Molkdar1991,Hashemi1993,Alsindi2009}, but only few account for
the avalanche effect.  The models
\cite{Kunisch2003a,Kunisch2003,Kunisch2006} treat early components via
a geometric model whereas the diffuse tail is generated via another
stochastic process; the connection between the propagation environment
and the diffuse tail is, however, not considered. Ray tracing methods
may also be used to do site-specific simulations
\cite{Fernandez2008}. However, ray tracing methods commonly predict
the signal energy to be concentrated into the single dominant components
whereas the diffuse tail is not well represented.
%  but to
% achieve tractable computational complexity, the maximum number of
% interactions considered is limited \cite{Fernandez2008}.

{ In this contribution, we model the channel impulse response for the
  purpose of studying the avalanche effect and, in particular, its
  relation to recursive scattering. The objective is a model which
  allows for simulation of both realizations of the channel response and
  average entities such as the delay-power spectrum.  Expanding on the
  previously presented work \cite{Pedersen2007,Pedersen2006}, we
  propose a unified approach for modeling of the transition from
  single dominant components to the diffuse tail. The propagation
  environment is represented in terms of a propagation graph, with
  vertices representing transmitters, receivers, and scatterers, while
  edges represent propagation conditions between vertices.  The
  propagation graph accounts inherently for recursive scattering and
  thus signal components arriving at the receiver may have undergone
  any number of scatterer interactions.  This modelling approach
  allows for expressing the channel transfer function in closed form
  for unlimited number of interactions. The formalism enables a
  virtual ``dissection'' of the impulse response in so-called partial
  responses to inspect how recursive scattering leads to a gradual
  buildup of the diffuse tail.

  Propagation graphs may be defined according to a specific scenario
  for site-specific prediction, or be generated as outcomes of a
  stochastic process for use in e.g. Monte Carlo simulations. In the
  present contribution we consider an example of a stochastically
  generated propagation graph suitable for Monte Carlo simulations. In
  the example model, scatterer interactions are assumed to cause no
  time dispersion and thus delay dispersion occurs only due to
  propagation in between vertices.  The simulations reveal that the
  graph's recursive structure yields both an exponential power decay
  and an avalanche effect in the generated impulse responses.  }

%The obtained model thus accounts inherently for the exponential power
%decay and the avalanche effect.

\section{Representing Radio Channels as Graphs}
{
In a typical propagation scenario, the electromagnetic signal emitted
by a transmitter propagates through the environment while  interacting with a
number of objects called scatterers.  The receiver, which is usually
placed away from the transmitter, senses the electromagnetic
signal. If a line of sight exists between the transmitter and
the receiver, direct propagation occurs. Also, the signal may arrive at
the receiver indirectly via one or more scatterers. 
In the following we represent propagation
mechanisms using graphs allowing for both recursive and non-recursive scattering. % We describe the propagation environment using
% a directed graph in which vertices represent transmitters, receivers,
% and scatterers; edges represent the propagation conditions between
% vertices.
We first introduce the necessary definitions of directed graphs.
}

\subsection{Directed Graphs}
\label{subsec:directed-graphs}
{
Following \cite{Diestel2000}, we define a directed graph $\mathcal{G}$
as a pair $(\mathcal{V},\mathcal{E})$ of disjoint sets of vertices and
edges. Edge $e\in \mathcal E$ with initial vertex denoted by
$\mathrm{init}(e)$ and terminal vertex denoted by $\mathrm{term}(e)$
is said to be outgoing from  vertex
$\mathrm{init}(e)$ and ingoing to $\mathrm{term}(e)$.  We consider
graphs without parallel edges. Thus, there exists no pair of edges
$e$ and $e'$ such that $\mathrm{init}(e) = \mathrm{init}(e')$ and
$\mathrm{term}(e)=\mathrm{term}(e')$. In this case an edge $e$ can be
identified by the vertex pair $(\mathrm{init}(e),\mathrm{term}(e))\in
\mathcal{V}^2$ and with a slight abuse of notation we write
$e=(\mathrm{init}(e),\mathrm{term}(e))$ and $\mathcal{E}\subseteq
\mathcal{V}\times\mathcal{V} $. % We remark that this notation
% does  allow for anti-parallel edges, i.e.  edges $e=(v,v')$ and
% $e'=(v',v)$ are anti-parallel. 
}
\subsection{Propagation Graphs}
\label{subsec:propagation-graphs}

We define a propagation graph as a directed graph where vertices
represent transmitters, receivers and scatterers.  Edges represent the
propagation conditions between the vertices. Thus, the vertex set
$\mathcal V$ of a propagation graph is a union of three disjoint sets:
$\mathcal V = \mathcal V_{\mathrm t} \cup \mathcal V_{\mathrm r} \cup
\mathcal V_{\mathrm s}$, the set of transmitters $\mathcal V_{\mathrm
  t}=\{\mathrm{Tx1},\dots,\mathrm{Tx}N_{\mathrm t} \}$, the set of receivers
$\mathcal V_{\mathrm r}=\{\mathrm{Rx1},\dots,\mathrm{Rx}N_{\mathrm r}
\}$, and the set of scatterers $\mathcal V_{\mathrm
  s}=\{\mathrm{S1},\dots,\mathrm{S}N_{\mathrm s}\}$.  The transmit vertices are
considered as sources and have only outgoing edges. Likewise, the
receivers are considered as sinks with only incoming edges. The edge
set $\mathcal E$ can thus be partitioned into
four disjunct sets as $\mathcal E = \mathcal E_{\mathrm d} \cup
\mathcal E_{\mathrm t} \cup \mathcal E_{\mathrm r} \cup \mathcal
E_{\mathrm s},$ with direct edges in $ \mathcal E_{\mathrm d} =\mathcal E \cap
(\mathcal V_{\mathrm t}\times \mathcal V_{\mathrm r} )$,
transmitter-scatterer edges in $ \mathcal E_{\mathrm t} =\mathcal E \cap (\mathcal
V_{\mathrm t}\times \mathcal V_{\mathrm s} ) $, 
scatterer-receiver edges in
 $ \mathcal E_{\mathrm r} =\mathcal E \cap
(\mathcal V_{\mathrm s}\times \mathcal V_{\mathrm r} ) $, and
 inter-scatterer edges in
 $ \mathcal E_{\mathrm s} =\mathcal E
\cap (\mathcal V_{\mathrm s}\times \mathcal V_{\mathrm s} )$. Fig.~\ref{fig:propagationgraph:mimo}
shows an example propagation graph.

% i.e., $ \mathcal E_{\mathrm d}$ 
% $ \mathcal E_{\mathrm t}$ , $
% \mathcal E_{\mathrm r}$  $
% \mathcal E_{\mathrm s}$ 

\begin{figure}
\centering
\resizebox{0.8\columnwidth}{!}{
\psset{unit=0.55cm}
\begin{pspicture}(-1.9,-3)(13,11)
\psset{gridcolor=yellow}
\psset{subgridcolor=yellow}
\psset{linewidth=0.5pt}
%\psgrid(-1,-1)(7.5,6.5)

% TRANSMITTER RECEIVER:

\rput(0,2){
\rput[c]{0}(0,0){\dotnode(0,0){Tx4}\nput[labelsep=0.1]{180}{Tx4}{Tx4}}
\rput[c]{0}(0,1){\dotnode(0,0){Tx3}\nput[labelsep=0.1]{180}{Tx3}{Tx3}}
\rput[c]{0}(0,2){\dotnode(0,0){Tx2}\nput[labelsep=0.1]{180}{Tx2}{Tx2}}
\rput[c]{0}(0,3){\dotnode(0,0){Tx1}\nput[labelsep=0.1]{180}{Tx1}{Tx1}}
\psccurve[linecolor=gray](0,-1)(1,4)(-1,4)(-1.4,0)
\rput*(-1,4){$\mathcal V_{\mathrm{t}}$}
}

\rput(5,7){
\rput[c]{0}(0,2){\dotnode(0,0){Rx1}\nput[labelsep=0.1]{90}{Rx1}{Rx1}}
\rput[c]{0}(1,1){\dotnode(0,0){Rx2}\nput[labelsep=0.1]{90}{Rx2}{Rx2}}
\rput[c]{0}(0,0){\dotnode(0,0){Rx3}\nput[labelsep=0.1]{90}{Rx3}{Rx3}}
\psccurve[linecolor=gray](0,-1)(2,0)(2,3)(0,3)(-1,2.2)
\rput*(2,3){$\mathcal V_{\mathrm{r}}$}
}

%SCATTERERS:
\rput(6,0){
  \rput[c]{0}(1,2){\dotnode(0,0){s1}\nput[labelsep=0.2]{-90}{s1}{S1}}
  \rput[c]{0}(3,1){\dotnode(0,0){s2}\nput[labelsep=0.1]{-90}{s2}{S2}}
  \rput[c]{0}(3.5,3.3){\dotnode(0,0){s3}\nput[labelsep=0.15]{60}{s3}{S3}}
  \rput[c]{0}(5,2){\dotnode(0,0){s4}\nput[labelsep=0.1]{-45}{s4}{S4}}
  \rput[c]{0}(6,3){\dotnode(0,0){s5}\nput[labelsep=0.1]{90}{s5}{S5}}
  \rput[c]{0}(3,5){\dotnode(0,0){s6}\nput[labelsep=0.1]{0}{s6}{S6}}
  \psccurve[linecolor=gray](0,0)(1,1)(2,-0.2)(3,0.2)(4,0)(7,2)(7,4)(6,6)(5,5.5)(4,6)(1,5)
\rput*(7,4){$\mathcal V_{\mathrm{s}}$}
}

\ncline[nodesepA=0.1, nodesepB=0.1]{cc->}{Tx1}{Rx1}\lput{:U}{}
%\ncline[nodesepA=0.1, nodesepB=0.1]{cc->}{Tx1}{s3}\lput{:U}{}
\ncline[nodesepA=0.1, nodesepB=0.1]{cc->}{Tx2}{Rx1}\lput{:U}{}
\ncline[nodesepA=0.1, nodesepB=0.1]{cc->}{Tx2}{s3}\lput{:U}{}
\ncline[nodesepA=0.1, nodesepB=0.1]{cc->}{Tx3}{Rx1}\lput{:U}{}
\ncline[nodesepA=0.1, nodesepB=0.1]{cc->}{Tx3}{s3}\lput{:U}{}

\ncline[nodesepA=0.1, nodesepB=0.1]{cc->}{Tx4}{s6}\lput{:U}{}
\ncline[nodesepA=0.1, nodesepB=0.1]{cc->}{Tx4}{s1}\lput{:U}{}
\ncline[nodesepA=0.1, nodesepB=0.1]{cc->}{s3}{s1}\lput{:U}{}
\ncarc[nodesepA=0.1, nodesepB=0.1]{cc->}{s1}{s2}\lput{:U}{}
\ncarc[nodesepA=0.1, nodesepB=0.1]{cc->}{s2}{s1}\lput{:U}{}
\ncline[nodesepA=0.1, nodesepB=0.1]{cc->}{s2}{s4}\lput{:U}{}
\ncline[nodesepA=0.1, nodesepB=0.1]{cc->}{s4}{s3}\lput{:U}{}
\ncline[nodesepA=0.1, nodesepB=0.1]{cc->}{s4}{s5}\lput{:U}{}
\ncline[nodesepA=0.1, nodesepB=0.1]{cc->}{s6}{Rx2}\lput{:U}{}
\ncline[nodesepA=0.1, nodesepB=0.1]{cc->}{s3}{Rx3}\lput{:U}{}
\ncline[nodesepA=0.1, nodesepB=0.1]{cc->}{s1}{Rx3}\lput{:U}{}

\rput[tl](-2.5,-0.1){
\begin{minipage}{0.8\linewidth}
\scriptsize
\vspace{-2ex}
\begin{align*}
\mathcal E_{\mathrm d} &= \{
(\mathrm{Tx1},\mathrm{Rx1}),\allowbreak(\mathrm{Tx2},\mathrm{Rx1}),\allowbreak(\mathrm{Tx3},\mathrm{Rx1})\} 
\\
\mathcal E_{\mathrm t} &= \{
(\mathrm{Tx2},\mathrm{S3}),\allowbreak
(\mathrm{Tx3},\mathrm{S3}),\allowbreak
(\mathrm{Tx4},\mathrm{S6}),\allowbreak
(\mathrm{Tx4},\mathrm{S1})
\}
\\
\mathcal E_{\mathrm r} &= \{
(\mathrm{S1},\mathrm{Rx3}),\allowbreak
(\mathrm{S3},\mathrm{Rx3}),\allowbreak
(\mathrm{S6},\mathrm{Rx2})
\}\\
\mathcal E_{\mathrm s} &= \{
(\mathrm{S1},\mathrm{S2}),\allowbreak
(\mathrm{S2},\mathrm{S1}),\allowbreak
(\mathrm{S3},\mathrm{S1}),\allowbreak
(\mathrm{S2},\mathrm{S4}),\allowbreak
(\mathrm{S4},\mathrm{S3}),\allowbreak
(\mathrm{S4},\mathrm{S5})
\}   
\end{align*}
\end{minipage}
}
\end{pspicture}
} % end of resizebox
\caption{A propagation graph with four transmit vertices, three receive
  vertices and six scatterer vertices. }
  \label{fig:propagationgraph:mimo}
\end{figure}
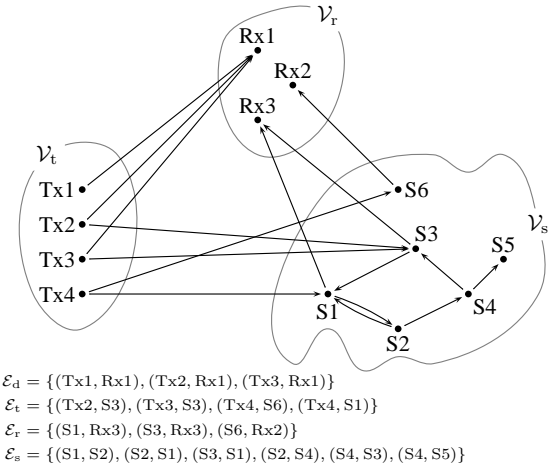

%  for a
% communication system with $N_{\mathrm t}=4$ transmitters, $N_{\mathrm
%   r}=3$ receivers, and $N=6$ scatterers. The depicted graph has two 
% cycles.  

%\subsection{Modeling of Propagation Mechanisms}

The signals propagate in the graph in the following way. Each
transmitter emits a signal that propagates via its outgoing edges. The
signal observed by a receiver vertex is the sum of the signals
arriving via the ingoing edges. A scatterer sums up the signals on its
ingoing edges and re-emits the sum-signal on the outgoing edges.  As a
signal propagates along an edge, or interacts with a scatterer, it
undergoes delay and dispersion in time. The specific delay dispersion
endured by a signal depends on the particular propagation mechanism
along its edges. Assuming these mechanisms to be linear and
time-invariant, this effect can be represented as a convolution with
an impulse response or, in the Fourier domain, as a multiplication
with a transfer function.  Hence, the signal arriving to vertex
$v_{n'}$ via edge $e=(v_n,v_{n'})$ reads $A_{e}(f)C_n(f)$, where
$C_n(f)$ is the signal emitted by vertex $v_n$ and $A_e(f)$ denotes
the transfer function associated to edge $e$. In other words, the
transfer function $A_{e}(f)$ describes the interaction at the initial
vertex $v_n$ and the propagation from $v_n$ to $v_{n'}$.

\subsection{Weighted Adjacency Matrix of a Propagation Graph }
Propagation along the edges is described via a transfer matrix $\mat
A(f)$ which can be viewed as an adjacency matrix with each entry
coinciding with the edge transfer function of the corresponding
edge. Thus, the weighted adjacency matrix $\mat A(f) \in
\CC^{(N_{\mathrm t}+N_{\mathrm r}+N_{\mathrm s})\times(N_{\mathrm
    t}+N_{\mathrm r}+N_{\mathrm s})}$ of the propagation graph
$\mathcal G$ is defined as
 \begin{equation}
  \label{eq:15}
  [\mat A(f)]_{nn'} =
  \begin{cases}
    A_{(v_n,v_{n'})}(f) & \text{if } (v_n,v_{n'})\in \mathcal E,\\
    0 & \text{otherwise,}
  \end{cases}
\end{equation}
i.e., entry $n,n'$ of $\mat A(f)$ is the transfer function from vertex
$v_n$ to vertex $v_{n'}$ of $\mathcal G$.  
Selecting the indexing of the vertices according to
\begin{equation}
  \label{eq:11}
  v_n\in
\begin{cases}
    \mathcal V_{\mathrm{t}}, &n=1,\dots,N_{\mathrm t} \\
    \mathcal V_{\mathrm{r}}, &n=N_{\mathrm t} + 1,\dots,N_{\mathrm t}+N_{\mathrm r} \\
    \mathcal V_{\mathrm{s}}, &n=N_{\mathrm t}+N_{\mathrm r}+1,\dots,N_{\mathrm t} +N_{\mathrm r}+N_{\mathrm s},
  \end{cases}
\end{equation}
the weighted adjacency matrix takes the form 
\begin{equation}
  \label{eq:44}
   \mat A(f) = 
   \begin{bmatrix}
     \mat 0 &\mat 0&  \mat 0 \\ 
     \mat D(f) & \mat 0 & \mat R(f) \\ 
     \mat T(f) &  \mat 0 &  \mat B(f) 
   \end{bmatrix},
\end{equation}
where $\mat 0$ denotes the all-zero matrix of the appropriate
dimension and the transfer matrices
\begin{alignat}{4} 
  \label{eq:45a}
     \mat D(f)& \in \CC^{N_{\mathrm r}\times N_{\mathrm t}}\quad&&\text{connecting
       $\mathcal V_\mathrm t$ to $\mathcal V_\mathrm r$,}\\
\label{eq:45b}
     \mat R(f)& \in \CC^{N_{\mathrm r}\times N_{\mathrm s}} \quad &&\text{connecting
       $\mathcal V_\mathrm s$ to $\mathcal V_\mathrm r$,}\\ 
\label{eq:45c}
     \mat T(f)& \in \CC^{N_{\mathrm s}\times N_{\mathrm t}} \quad&& \text{connecting
       $\mathcal V_\mathrm t$ to $\mathcal V_\mathrm s$, and} \\
\label{eq:45d}
     \mat B(f)& \in \CC^{N_{\mathrm s}\times N_{\mathrm s}}\quad &&\text{interconnecting
       $\mathcal V_\mathrm s$.}
\end{alignat}
The special structure of $\mat A(f)$ reflects the structure of the
propagation graph. The first $N_{\mathrm t}$ rows are zero because we do not
accept incoming edges into the transmitters. Likewise columns
$N_{\mathrm t}+1,\dots,N_{\mathrm t}+N_{\mathrm r}$ are all zero since the receiver vertices have no
outgoing edges.

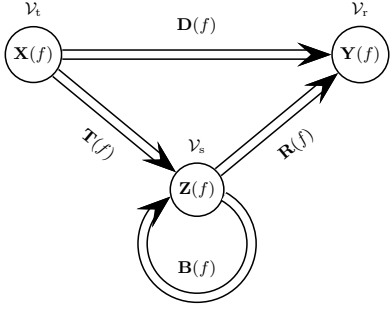
\begin{figure}
  \centering
% \psset{unit=0.5cm}
\resizebox{0.9\columnwidth}{!}{
\begin{pspicture}(-1,-2)(10,4.5)
\small
\cnodeput(1.5,3){vt}{$\vec X(f)$}
\cnodeput(7.5,3){vr}{$\vec Y(f)$}
\cnodeput(4.5,0.5){vs}{$\vec Z(f)$}
\nput{90}{vt}{$\mathcal V_{\mathrm{t}}$}
\nput{90}{vr}{$\mathcal V_{\mathrm{r}}$}
\nput{90}{vs}{$\mathcal V_{\mathrm{s}}$}
\ncline[doubleline=true,doublesep=5\pslinewidth,arrows=->]{vt}{vr}
\naput[nrot=:U]{$\mat D(f)$}
\ncline[doubleline=true,doublesep=5\pslinewidth,arrows=->]{vt}{vs}
\nbput[nrot=:U]{$\mat T(f)$}
\ncline[doubleline=true,doublesep=5\pslinewidth,arrows=->]{vs}{vr}
\nbput[nrot=:U]{$\mat R(f)$}
\nccircle[angleA=-180,doubleline=true,doublesep=5\pslinewidth,arrows=<-]{vs}{1}
\naput{$\mat B(f)$}
\end{pspicture}}
  \caption{Vector signal flow graph representation of a propagation
    graph. Vertices represent vertex sets of the propagation graph
    with associated vector signals. Signal transmission between the
    sets are represented by the edges and associated transfer matrices. }
  \label{fig:blockDiagram}
\end{figure}

The input signal vector $\vec X(f)$ is defined as
\begin{equation}
  \label{eq:7}
\vec X(f)= 
   [ X_{1}(f),\dots,X_{N_{\mathrm t}}(f)]\transpose,
\end{equation}
where $X_m(f)$ is the spectrum of the signal emitted by transmitter
$\mathrm{Tx}m$, and $[\cdot]\transpose$ denotes the transposition
operator.  The output signal vector $\vec Y(f)$ is defined as
\begin{equation}
  \label{eq:8}
  \vec Y(f) =[Y_1(f),\dots, Y_{N_{\mathrm r}}(f)]\transpose,
\end{equation}
where $Y_m(f)$ is the Fourier transform of the signal observed by
receiver $\mathrm{Rx}m$. Similar, to $\vec X(f)$ and $\vec Y(f)$ we
let $\vec Z(f)$ denote the output signal vector of the scatterers:
\begin{align}
\label{eq:9}
\vec Z(f) =[
    Z_{1}(f),\dots,Z_{N_{\mathrm s}}(f)]\transpose
\end{align}
with the $n$th entry denoting the Fourier transform of the signal
observed at scatterer vertex $\mathrm{S}n$. 
By the definition of the propagation graph, there are no other signal
sources than the vertices in $\mathcal V_{\mathrm{t}}$.
Assuming linear and time-invariant propagation mechanisms, the
input-output relation in the Fourier domain reads
\begin{equation}
  \label{eq:10}
  \vec Y(f) = \mat H(f) \vec X(f),
\end{equation}
where $\mat H(f)$ is the $N_{\mathrm r}\times N_{\mathrm t}$ transfer
matrix of the propagation graph.  

The structure of the propagation graph unfolds in the vector signal
flow graph depicted in Fig.~\ref{fig:blockDiagram}. The vertices of
the vector signal flow graph represent the three sets $\mathcal
V_\mathrm t$, $\mathcal V_\mathrm r$, and $\mathcal V_\mathrm s$ with
the associated signals $\vec X(f)$,$\vec Y(f)$, and $\vec Z(f)$. The
edge transfer matrices of the vector signal flow graph are the
sub-matrices of $\mat A(f)$  defined in
\eqref{eq:45a}--\eqref{eq:45d}.

\section{Transfer Matrix of a  Propagation Graph}
\label{sec:gener-transf-funct}
In the following we derive the transfer matrix of a propagation
graph.  In Subsection~\ref{sec:prop-paths-walks} we first discuss how
the response of a graph is composed of signal contributions
propagating via different propagation paths. This representation is,
albeit intuitive, impractical for computation of the transfer matrix
of graphs with cycles.  Thus in
Subsections~\ref{sec:transf-funct-recurs} and
\ref{sec:part-transf-funct} we give the transfer matrix and partial
transfer matrices of a general propagation graph in closed form. Subsection
\ref{sec:recipr-prop-graphs} treats the graphical interpretation of
reciprocal channels. The section concludes with a discussion of 
related results in the literature.

\subsection{Propagation Paths and Walks }
\label{sec:prop-paths-walks}
The concept of a propagation path is a corner stone in modeling
multipath propagation. In the literature, this concept is most often
defined in terms of the resulting signal components arriving at the
receiver. A shortcoming of this definition is that it is often hard to
relate to the propagation environment. The graph terminology offers a
convenient alternative.  A walk $\ell$ (of length $K$) in a graph
$\mathcal{G} = (\mathcal V, \mathcal E)$ is defined as a sequence
$\ell = ( v^{(1)},v^{(2)},\dots, v^{(K+1)})$ of vertices in
$\mathcal{V}$ such that $(v^{(k)},v^{(k+1)}) \in \mathcal E, k=1,\dots
K$. We say that $\ell$ is a walk from $v^{(1)}$ to $v^{(K+1)}$ where
$v^{(1)}$ is the start vertex and $v^{(K+1)}$ is the end vertex;
$v^{(2)},\dots,v^{(K)}$ are called the inner vertices of $\ell$.
% walk which fulfills $v_1 = v_K$ is called a cycle.  A
% loop is an edge $e = (v,v)$, i.e., $\mathrm{init}(e) =
% \mathrm{term}(e)$.  
% Thus, by definition, a loop is a walk of length
% $1$. % A path is a walk, withou parallel edges, where
% the vertices $v_2,\dots,v_{K-1}$ are distinct.
We define a propagation path as a walk in a propagation graph from a
transmitter to a receiver. Consequently, all (if any) inner vertices
of a propagation path are scatters.  A signal that propagates along
propagation path $\ell$ traverses $K+1$ (not necessarily different)
edges and undergoes $K$ interactions. We refer to such a propagation
path as a $K$-bounce path. The zero-bounce path $\ell=( v, v')$ is
called the direct path, from transmitter $v$ to receiver $v'$. As an
example, referring to the graph depicted in
Fig.~\ref{fig:propagationgraph:mimo}, it is straightforward to verify
that $\ell_1 = (\mathrm{Tx1},\mathrm{Rx1}) $ is a direct path, $\ell_2
= (\mathrm{Tx4},\mathrm{S6},\mathrm{Rx2}) $ is a single-bounce path,
and $\ell_3 =
(\mathrm{Tx4},\mathrm{S1},\mathrm{S2},\mathrm{S1},\mathrm{Rx3}) $ is a
3-bounce path.

We denote by $\mathcal{L}_{vv'}$ the set of propagation
paths in $\mathcal{G}$ from transmitter $v$ to receiver $v'$.  The
signal received at $v'$ originating from transmitter $v$ is the
superposition of signal components each propagating via a propagation
path in $\mathcal{L}_{vv'}$.  Correspondingly, entry $(v,v')$ of $\mat
H(f)$ reads
\begin{equation}
  \label{eq:4}
  H_{vv'}(f) = \sum_{\ell\in\mathcal{L}_{vv'}} H_\ell(f),
\end{equation}
where $H_\ell(f)$ is the transfer function of propagation path $\ell$. 

The number of terms in \eqref{eq:4} equals the cardinality of
$\mathcal{L}_{vv'}$, which may, depending on the structure of the
graph, be finite or infinite. As an example, the number of propagation
paths is infinite if $v$ and $v'$ are connected via a directed cycle
in $\mathcal G$, i.e. if a propagation path contains a walk from an
inner vertex back to itself. The graph in
Fig.~\ref{fig:propagationgraph:mimo} contains two directed cycles
which are connected to both transmitters and receivers.

In the case of an infinite number of propagation paths, computing
$H_{vv'}(f)$ directly from \eqref{eq:4} is infeasible. This problem is
commonly circumvented by truncating the sum in \eqref{eq:4} to
approximate $H_{vv'}(f)$ as a finite sum. This approach, however,
calls for a method for determining how many terms of the sum should be
included in order to achieve reasonable approximation.

In the frequently used ``$K$-bounce channel models'', propagation
paths with more than  $K$ interactions are ignored. This approach is
motivated by the rationale that at each interaction, the signal is attenuated,
and thus terms in \eqref{eq:4} resulting from propagation paths with a
large number of bounces are weak and can be left out as they do not
affect the sum much. This reasoning, however, holds true only if the
\emph{sum} of the components with more than $K$ interactions is
insignificant, which may or may not be the case. From this
consideration, it is clear that the truncation criterion is
non-trivial as it essentially necessitates computation of the whole
sum before deciding whether a term can be ignored or not.

\subsection{Transfer Matrix for Recursive and Non-Recursive
  Propagation Graphs}
\label{sec:transf-funct-recurs}
As an alternative to the approximation methods applied to the sum
\eqref{eq:4} we now give an exact closed-form expression for the
transfer matrix $\mat H(f)$. Provided
that the spectral radius of $\mat B(f)$ is less
than unity, the expression holds true for any number of terms in the sum
\eqref{eq:4} and thus holds regardless whether the number of
propagation paths is finite or infinite.
\begin{theorem}
\label{thm:transf-funct-recurs}
If the spectral radius of $\mat B(f)$  is less than unity,
then the transfer matrix of a propagation graph reads
\begin{align}
  \label{eq:60}
  \mat H(f) = \mat D(f) + \mat R(f)[\mat I - \mat B(f)]^{-1} \mat
  T(f).
\end{align}
\vspace{-2ex}
\end{theorem}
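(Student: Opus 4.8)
The plan is to evaluate the propagation-path sum \eqref{eq:4} in closed form by grouping paths according to their number of bounces and recognizing the resulting matrix series as a Neumann series. First I would exploit the block structure of $\mat A(f)$ in \eqref{eq:44}: since transmitters carry no incoming edges and receivers no outgoing edges, every propagation path from a transmitter $v$ to a receiver $v'$ is either the direct edge (the zero-bounce path) or it enters the scatterer set through an edge counted by $\mat T(f)$, moves among scatterers through edges counted by $\mat B(f)$, and finally reaches the receiver through an edge counted by $\mat R(f)$. Summing the path transfer functions $H_\ell(f)$ over all $K$-bounce paths for a fixed $K$ therefore collapses to a single matrix product: the zero-bounce paths contribute $\mat D(f)$, and for $K\ge 1$ the $K$-bounce paths contribute $\mat R(f)\,\mat B(f)^{K-1}\,\mat T(f)$, where the middle factor $\mat B(f)^{K-1}$ automatically sums over all choices of the $K-1$ intermediate scatterer-to-scatterer transitions.

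Collecting these contributions gives
\begin{equation*}
  \mat H(f) = \mat D(f) + \sum_{K=1}^{\infty} \mat R(f)\,\mat B(f)^{K-1}\,\mat T(f) = \mat D(f) + \mat R(f)\Bigl(\sum_{j=0}^{\infty}\mat B(f)^{j}\Bigr)\mat T(f) ,
\end{equation*}
so the remaining task is to evaluate the matrix geometric series $\sum_{j=0}^{\infty}\mat B(f)^{j}$. This is precisely where the hypothesis enters: if the spectral radius of $\mat B(f)$ is less than unity, the Neumann series converges and equals $[\mat I - \mat B(f)]^{-1}$, which in particular forces $1$ not to be an eigenvalue of $\mat B(f)$ and hence guarantees that $\mat I - \mat B(f)$ is nonsingular. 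Substituting this identity yields \eqref{eq:60}.

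I expect the main obstacle to be making the combinatorial grouping in the first step rigorous, i.e.\ verifying that the sum over $K$-bounce paths genuinely equals the single product $\mat R(f)\,\mat B(f)^{K-1}\,\mat T(f)$ rather than merely asserting it. A clean way to discharge this is to use the standard fact that the $(n,n')$ entry of $\mat A(f)^{k}$ equals the sum, over all walks $\ell$ of length $k$ from $v_n$ to $v_{n'}$, of the product of the edge transfer functions along $\ell$; this follows by an easy induction on $k$ from the definition of matrix multiplication. Reading off the receiver-from-transmitter block of $\sum_{k\ge 1}\mat A(f)^{k}$ and evaluating the relevant blocks of $\mat A(f)^{k}$ from the block form \eqref{eq:44} then reproduces the grouped series above with no ad hoc path enumeration. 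As an alternative that avoids \eqref{eq:4} altogether, I could read the signal-flow relations $\vec Z(f) = \mat T(f)\vec X(f) + \mat B(f)\vec Z(f)$ and $\vec Y(f) = \mat D(f)\vec X(f) + \mat R(f)\vec Z(f)$ directly from Fig.~\ref{fig:blockDiagram}, solve the first for $\vec Z(f) = [\mat I - \mat B(f)]^{-1}\mat T(f)\vec X(f)$, and substitute into the second; invertibility of $\mat I - \mat B(f)$ is again supplied by the spectral-radius condition, and comparison with \eqref{eq:10} delivers \eqref{eq:60} at once.
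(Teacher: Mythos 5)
Your proposal is correct and follows essentially the same route as the paper: the main argument (grouping paths by bounce count into $\mat D(f)$ and $\mat R(f)\mat B^{k-1}(f)\mat T(f)$, then summing the Neumann series) is the paper's first proof, and your alternative via the signal-flow relations $\vec Z(f) = \mat T(f)\vec X(f) + \mat B(f)\vec Z(f)$ and $\vec Y(f) = \mat D(f)\vec X(f) + \mat R(f)\vec Z(f)$ is exactly the paper's second proof. Your added observation that the per-bounce grouping can be made rigorous via the walk-counting interpretation of powers of the weighted adjacency matrix is a sensible refinement that the paper leaves implicit.
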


According to Theorem~\ref{thm:transf-funct-recurs} the transfer matrix
$\vec H(f)$ consists of the two following terms: $\mat D(f)$
representing direct propagation between the transmitters and receivers
and $\mat R(f) [\mat I - \mat B(f)]^{-1} \mat T(f)$ describing
indirect propagation.  The condition that the spectral radius
of $\mat B(f)$ be less than unity implies that for any vector norm
$\|\cdot \|$, $\|\mat Z(f)\| > \| \mat B(f)\vec Z(f) \|$ for non-zero
$\|\vec Z(f)\|$, cf. \cite{Horn1985}. For the Euclidean norm in
particular this condition implies the sensible physical requirement
that the signal power strictly  decreases for each interaction.

\begin{proof}
Let $\mat H_k(f)$ denote the transfer matrix for all $k$-bounce
propagation paths, then $\mat H(f)$ can be decomposed as
\begin{equation}
  \label{eq:22}
\mat H(f) = \sum_{k=0}^{\infty} \mat H_k(f),
\end{equation}
where
\begin{equation}
   \label{eq:29}
  \mat H_k(f) =
  \begin{cases}
    \mat D(f), & k = 0\\
    \mat R(f)\mat B^{k-1}(f)\mat T(f),& k>0.
  \end{cases}
\end{equation}
Insertion of \eqref{eq:29} into \eqref{eq:22} yields
\begin{align}
  \mat H(f) &= \mat D(f) + \mat R(f) \left(\sum_{k=1}^{\infty}\mat
  B^{k-1}(f)\right)  \mat T(f).%\nonumber \\
  % &= \mat D(f) + \mat R(f)\left(\sum_{k=0}^{\infty} \mat
  %   B^{k}(f)\right) \mat T(f).
  \label{eq:25}
\end{align}
The infinite sum in \eqref{eq:25} is a Neumann series converging to
$[\mat I - \mat B(f)]^{-1}$ if the spectral radius of $\mat B(f)$) is
less than unity.  Inserting this in \eqref{eq:25}
completes the proof.
\end{proof}

The decomposition introduced in \eqref{eq:22} makes the effect of the
recursive scattering directly visible. The received signal vector is a
sum of infinitely many components resulting from any number of
interactions. The structure of the propagation mechanism is further
exposed by \eqref{eq:25} where the emitted vector signal is
re-scattered successively in the propagation environment leading to
the observed Neumann series. This allows for modeling of channels with
infinite impulse responses by expression \eqref{eq:60}. 

It is possible to arrive at \eqref{eq:60} in an alternative, but less
explicit, manner:

 \begin{proof}
   It is readily observed from the vector signal flow graph in
   Fig.~\ref{fig:blockDiagram} that $\vec Z(f)$ can be expressed as
\begin{align}
  \label{eq:27}
  \vec Z(f) = \mat T(f) \vec X(f) + \mat B(f)   \vec Z(f).
\end{align}
Since the spectral radius of $\mat B(f)$ is less than unity we obtain
for $\mat Z(f)$ the solution
\begin{equation}
  \label{eq:56}
    \vec Z(f) = [\mat I - \mat B(f)]^{-1} \mat T(f) \vec X(f).
\end{equation}
Furthermore, according to Fig.~\ref{fig:blockDiagram} the received signal is
of the form
\begin{align}
  \label{eq:57}  
  \vec Y(f) &= \mat D(f) \vec X(f) + \mat R(f)   \vec Z(f).
\end{align}
Insertion of \eqref{eq:56} in this expression yields
\eqref{eq:60}. 
\end{proof}

We remark that the above two proofs allow for propagation paths with any number of
bounces. This is highly preferable, as the derived expression
\eqref{eq:60} is not impaired by approximation errors due to the
truncation of the series into a finite number of terms as it occurs
when using $K$-bounce models.

A significant virtue of the expression \eqref{eq:60} is that
propagation effects related to the transmitters and receivers are
accounted for in the matrices $\mat D(f)$, $\mat T(f)$ and $\mat
R(f)$, but do not affect $\mat B(f)$.  Consequently, the matrix
$[\mat I-\mat B(f)]^{-1}$ only needs to be computed once even though
the configuration of transmitters and receivers changes. This is
especially advantageous for simulation studies of e.g. spatial
correlation as this leads to a significant reduction in computational
complexity.

\subsection{Partial Transfer Matrices}
\label{sec:part-transf-funct}
The closed form expression \eqref{eq:60} for the transfer matrix of a
propagation graph accounts for propagation via an arbitrary number of
scatterer interactions. For some applications it is, however, relevant
to study only some part of the impulse response associated with a
particular number of interactions. One case is where a propagation
graph is used to generate only a part of the response and other
techniques are used for the remaining parts. Another case is when one
must assess the approximation error when the infinite series is
truncated.  In the following we derive a few useful expressions for
such partial transfer matrices.

We define the $K:L$ partial transfer matrix as
\begin{equation}
  \label{eq:31}
  \mat H_{K:L}(f) = \sum_{k=K}^L \mat H_{k}(f), \quad  0\leq K\leq L,
\end{equation}
i.e., we include only contributions from propagation paths with at least $K$,
but no more than $L$ bounces. It is straightforward to evaluate
\eqref{eq:31} for $K=0$, and $L=0,1,2$:
\begin{align}
  \label{eq:24}
 \mat   H_{0:0}(f) & = \mat D(f)\\
\label{eq:51}
 \mat   H_{0:1}(f) & = \mat D(f)+ \mat R(f) \mat T (f) \\ 
\label{eq:53}
\mat   H_{0:2}(f) & = \mat D(f)+ \mat R(f) \mat T (f) 
+\mat R(f)\mat B(f) \mat T (f).
\end{align}
This expansion of the truncated series is quite intuitive but the
obtained expressions are increasingly complex for large
$L$. Theorem~\ref{thm:part-transf-matr} gives a closed form
expression of the partial transfer function $\mat H_{K,L}(f)$ for
arbitrary $K$ and $L$:

\begin{theorem}
\label{thm:part-transf-matr}
  The partial response $\mat H_{K:L}(f)$  is given by
  \begin{align}
    \mat H_{K:L}(f) &= 
\begin{cases}
\mat D(f) + \mat R(f)[\mat
I - \mat B^{L}(f)][\mat I -\mat B(f)]^{-1}\mat
T(f), & \\  \hfill K = 0, L\geq 0 & \\[1ex]
\mat R(f)[\mat
B^{K-1}(f) - \mat B^{L}(f)][\mat I -\mat B(f)]^{-1}\mat
T(f), & \\  \hfill   0<K\leq L, & \\[1ex]
\end{cases}\nonumber
  \end{align}
provided that the spectral radius of  $\mat B(f)$ is less than unity.
\end{theorem}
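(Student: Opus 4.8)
The plan is to start directly from the definition \eqref{eq:31} of $\mat H_{K:L}(f)$ and substitute the per-bounce transfer matrices \eqref{eq:29} already obtained in the proof of Theorem~\ref{thm:transf-funct-recurs}. I would dispatch the two cases in the order $0<K\leq L$ first and $K=0$ second, since the latter reduces to the former once the direct term $\mat D(f)$ is peeled off. Throughout, the only structural input is that the spectral radius of $\mat B(f)$ is below unity, which I would use at the outset to note that $1$ is not an eigenvalue of $\mat B(f)$, so $\mat I-\mat B(f)$ is invertible and the inverse appearing in the statement is well defined.

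For the case $0<K\leq L$, every summand in \eqref{eq:31} has the form $\mat R(f)\mat B^{k-1}(f)\mat T(f)$, so I would factor $\mat R(f)$ on the left and $\mat T(f)$ on the right to write $\mat H_{K:L}(f)=\mat R(f)\bigl(\sum_{k=K}^{L}\mat B^{k-1}(f)\bigr)\mat T(f)$, and then reindex $j=k-1$ to turn the bracket into the finite geometric matrix series $\sum_{j=K-1}^{L-1}\mat B^{j}(f)$ of $L-K+1$ terms. The central computation is to close this sum in form. Multiplying the partial sum $\sum_{i=0}^{m}\mat B^{i}(f)$ by $\mat I-\mat B(f)$ and telescoping yields the identity $\sum_{i=0}^{m}\mat B^{i}(f)=[\mat I-\mat B^{m+1}(f)][\mat I-\mat B(f)]^{-1}$. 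Factoring $\mat B^{K-1}(f)$ out of $\sum_{j=K-1}^{L-1}\mat B^{j}(f)$ and applying this identity with $m=L-K$ gives $\mat B^{K-1}(f)[\mat I-\mat B^{L-K+1}(f)][\mat I-\mat B(f)]^{-1}$; since $\mat B^{K-1}(f)[\mat I-\mat B^{L-K+1}(f)]=\mat B^{K-1}(f)-\mat B^{L}(f)$, this is exactly the claimed bracket, and reinstating $\mat R(f)$ and $\mat T(f)$ produces the second line of the theorem.

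For the case $K=0$ I would split the sum as $\mat H_{0:L}(f)=\mat H_{0}(f)+\mat H_{1:L}(f)=\mat D(f)+\mat H_{1:L}(f)$ and invoke the case just proved with $K=1$, for which $\mat B^{K-1}(f)=\mat I$. This delivers the first line immediately, and the degenerate instance $L=0$ is consistent, since $\mat I-\mat B^{0}(f)=\mat 0$ collapses the second term and leaves $\mat H_{0:0}(f)=\mat D(f)$, in agreement with \eqref{eq:24}.

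Since the argument is essentially bookkeeping, there is no deep obstacle; the points requiring care are all in the third step. I must justify invertibility of $\mat I-\mat B(f)$ from the spectral-radius hypothesis, track the index shift $j=k-1$ and the term count $L-K+1$ correctly, and observe that every factor in play is a polynomial in $\mat B(f)$ (together with the limit $[\mat I-\mat B(f)]^{-1}$), so all these factors commute and the factorizations are legitimate. As a consistency check worth recording, letting $L\to\infty$ sends $\mat B^{L}(f)\to\mat 0$ and recovers the closed form \eqref{eq:60} of Theorem~\ref{thm:transf-funct-recurs}.
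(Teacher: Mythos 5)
Your proof is correct, but it takes a different (and slightly more elementary) route than the paper. The paper writes the finite block as a difference of two infinite tails, $\mat H_{K:L}(f)=\mat H_{K:\infty}(f)-\mat H_{L+1:\infty}(f)$, and closes each tail with the Neumann series $\sum_{k\ge 0}\mat B^k(f)=[\mat I-\mat B(f)]^{-1}$; the spectral-radius hypothesis is thus used twice for convergence of infinite series. You instead close the \emph{finite} geometric sum $\sum_{j=K-1}^{L-1}\mat B^{j}(f)$ directly via the telescoping identity $\sum_{i=0}^{m}\mat B^{i}(f)=[\mat I-\mat B^{m+1}(f)][\mat I-\mat B(f)]^{-1}$, so the hypothesis enters only through the invertibility of $\mat I-\mat B(f)$ --- in fact your argument establishes the finite-$L$ formula under the weaker assumption that $1$ is not an eigenvalue of $\mat B(f)$, with no convergence issues at all. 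What the paper's route buys in exchange is that the intermediate object $\mat H_{K:\infty}(f)$ in \eqref{eq:12} is itself of independent use (it is quoted later as the error of the $K$-bounce approximation), whereas your route buys a self-contained, purely algebraic derivation whose consistency checks (the $L=0$ collapse to $\mat H_{0:0}=\mat D$ and the $L\to\infty$ recovery of \eqref{eq:60}) you correctly record. Both proofs are sound; the algebraic kernel (the geometric series in $\mat B(f)$) is the same, applied to a finite sum in your case and to a pair of infinite tails in the paper's.
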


\begin{proof}
The partial transfer function for $ 0\leq K\leq L$ reads 
\begin{align}
  \mat H_{K:L}(f)% &= \sum_{k=K}^L \mat H_{k}(f)\nonumber \\
&= \sum_{k=K}^\infty \mat H_{k}(f) - \sum_{k'=L+1}^\infty \mat H_{k'}(f) \nonumber \\
&=\mat H_{K:\infty}(f) - \mat H_{L+1:\infty}(f).
  \label{eq:52}
\end{align}
For $K=0$ we have $\mat H_{0:\infty}(f) = \mat H(f)$ by definition;
for $K\geq 1$ we have
\begin{align}
  \mat H_{K:\infty}(f)% &= \sum_{k=K}^\infty \mat H_{k}(f)\nonumber \\ 
&= \mat R(f) \displaystyle \sum_{k=K-1}^\infty \mat B^k(f) \mat T(f)
\nonumber\\
&= \mat R(f) \mat B^{K-1}(f)\sum_{k=0}^\infty \mat B^k(f) \mat T(f)\nonumber \\
&= \mat R(f) \mat B^{K-1}(f)[\mat I - \mat B(f)]^{-1}\mat T(f).
\label{eq:12}
\end{align}
Inserting \eqref{eq:12} into   \eqref{eq:52} completes the proof.
\end{proof}

Theorem~\ref{thm:part-transf-matr} enables closed-form computation of
$\mat H_{K:L}(f)$ for any $K\geq L$.  We have already listed a few
partial transfer matrices in \eqref{eq:24}, \eqref{eq:51}, and
\eqref{eq:53}.  By definition the partial response $\mat H_{K:K}(f)$
equals $\mat H_{K}(f)$ for which an expression is provided in
\eqref{eq:29}.  The transfer function of the $K$-bounce approximation
is equal to $\mat H_{0:K}(f)$.  Another special case worth mentioning is
$\mat H_{K+1:\infty}(f) = \mat H(f) - \mat H_{0:K}(f)$ available from
\eqref{eq:12}, which gives the error due to the $K$-bounce
approximation. Thus the validity of the $K$-bounce approximation can
be assessed by evaluating some appropriate norm of $\mat
H_{K+1:\infty}(f)$.

\subsection{Reciprocity and Reverse Graphs}
\label{sec:recipr-prop-graphs}
In most cases, the radio channel is considered reciprocal.  As we
shall see shortly, the graph terminology accommodates an interesting
interpretation of the concept of reciprocity. For any propagation
graph we can define the reverse graph in which the roles of
transmitter and receiver vertices are swapped. The principle of
reciprocity states that the transfer matrix of the reverse channel is
equal to the transposed transfer matrix of the forward channel, i.e.,
a forward channel with transfer matrix $\mat H(f)$ has a reverse
channel with transfer matrix $\tilde{\mat H}(f) = \mat
H\transpose(f)$.  In the sequel we mark all entities related to the
reverse channel with a tilde.

We seek the relation between the forward graph $\mathcal G = (\mathcal
V, \mathcal E)$ and its reverse $\tilde{\mathcal G} = (\tilde
{\mathcal V}, \tilde {\mathcal E})$ under the assumption of
reciprocity. More specifically, we are interested in the relation
between the weighted adjacency matrix $\mat A(f)$ of $\mathcal G$ and
the weighted adjacency matrix $\tilde{\mat A}(f)$ of $\tilde{\mathcal
  G}$. We shall prove the following theorem:
\begin{theorem}
\label{thm:recipr-prop-graphs}
For a propagation graph $\mathcal G=(\mathcal V,\mathcal E)$ with
weighted adjacency matrix $\mat A(f)$ and transfer matrix $\mat H(f)$
there exists a reverse graph $ \tilde{\mathcal G} = (\tilde{\mathcal
  V}, \tilde{\mathcal E})$ such that $\tilde{\mathcal V}= \mathcal V$,
$ \tilde{\mathcal E} = \{(v,v'): (v',v)\in \mathcal E\} $,
$\tilde{\mat A}(f) = \mat A\transpose(f)$, and $\tilde{\mat H}(f) =
\mat H\transpose (f) $.
\end{theorem}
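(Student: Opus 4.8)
The plan is to prove the theorem constructively: I would exhibit one concrete reverse graph $\tilde{\mathcal G}$ and then verify the four asserted properties in turn, reducing the transfer-matrix identity to the closed form of Theorem~\ref{thm:transf-funct-recurs}.

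First I would define $\tilde{\mathcal G}$ by keeping the vertex set unchanged, $\tilde{\mathcal V}=\mathcal V$, reversing every edge, $\tilde{\mathcal E}=\{(v,v'):(v',v)\in\mathcal E\}$, and assigning to each reversed edge the transfer function of its forward counterpart. Swapping the roles of sources and sinks, the new transmitters are the old receivers and the new receivers are the old transmitters, while the scatterers are unchanged: $\tilde{\mathcal V}_{\mathrm t}=\mathcal V_{\mathrm r}$, $\tilde{\mathcal V}_{\mathrm r}=\mathcal V_{\mathrm t}$, $\tilde{\mathcal V}_{\mathrm s}=\mathcal V_{\mathrm s}$. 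I would then check that $\tilde{\mathcal G}$ is a legitimate propagation graph: the old transmitters carry only outgoing edges, so after reversal they carry only incoming edges and qualify as sinks, and dually the old receivers become valid sources. Hence the source/sink constraints are preserved and $\tilde{\mathcal G}$ is well defined, with $\tilde{\mathcal V}=\mathcal V$ and $\tilde{\mathcal E}$ as claimed.

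Next I would establish $\tilde{\mat A}(f)=\mat A\transpose(f)$, working throughout in the fixed forward vertex indexing of \eqref{eq:11}. Reversing an edge interchanges its initial and terminal vertices while leaving its transfer function unchanged, and the entry of the weighted adjacency matrix associated with an edge is determined by its (initial, terminal) pair; reversal therefore interchanges the two indices of that entry. This gives $[\tilde{\mat A}(f)]_{nn'}=[\mat A(f)]_{n'n}$ for all $n,n'$, i.e. $\tilde{\mat A}(f)=\mat A\transpose(f)$, so this step is essentially immediate from the definition of edge reversal and is convention independent.

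For the transfer-matrix identity I would compute $\tilde{\mat H}(f)$ from Theorem~\ref{thm:transf-funct-recurs} applied to $\tilde{\mathcal G}$. Partitioning $\mat A\transpose(f)$ according to the reverse roles (new transmitters $=$ old receivers, new receivers $=$ old transmitters, scatterers) and reading it off against the template \eqref{eq:44} yields $\tilde{\mat D}(f)=\mat D\transpose(f)$, $\tilde{\mat R}(f)=\mat T\transpose(f)$, $\tilde{\mat T}(f)=\mat R\transpose(f)$, and $\tilde{\mat B}(f)=\mat B\transpose(f)$. Since transposition preserves the spectrum, the spectral radius of $\tilde{\mat B}(f)=\mat B\transpose(f)$ equals that of $\mat B(f)$ and is less than unity, so Theorem~\ref{thm:transf-funct-recurs} applies and gives $\tilde{\mat H}(f)=\mat D\transpose(f)+\mat T\transpose(f)[\mat I-\mat B\transpose(f)]^{-1}\mat R\transpose(f)$. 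Transposing \eqref{eq:60} and using $([\mat I-\mat B(f)]^{-1})\transpose=[\mat I-\mat B\transpose(f)]^{-1}$ produces exactly this expression, hence $\tilde{\mat H}(f)=\mat H\transpose(f)$. I expect the main obstacle to be bookkeeping between the two indexings rather than any analytic difficulty: the identity $\tilde{\mat A}=\mat A\transpose$ is natural in the fixed vertex indexing, whereas invoking Theorem~\ref{thm:transf-funct-recurs} requires the reverse graph to be reordered into its own source--sink block form. The reconciling observation is that the zero blocks of $\mat A\transpose(f)$ lie exactly in the column block of the old transmitters and the row block of the old receivers, which after the role swap become the ``no outgoing edges from new sources'' and ``no incoming edges into new sinks'' conditions; thus $\mat A\transpose(f)$ is a bona fide propagation-graph adjacency matrix, and once the blocks are matched the remaining step is the routine transpose of the Neumann-series closed form.
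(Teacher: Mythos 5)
Your construction is exactly the paper's: reverse all edges while keeping their transfer functions, swap the roles of $\mathcal V_{\mathrm t}$ and $\mathcal V_{\mathrm r}$, read off $\tilde{\mat D}(f)=\mat D\transpose(f)$, $\tilde{\mat R}(f)=\mat T\transpose(f)$, $\tilde{\mat T}(f)=\mat R\transpose(f)$, $\tilde{\mat B}(f)=\mat B\transpose(f)$ from the block structure, and apply Theorem~\ref{thm:transf-funct-recurs} to the reverse graph before transposing \eqref{eq:60}. The argument is correct and follows the paper's proof essentially step for step; your explicit remark that transposition preserves the spectral radius of $\mat B(f)$ is a small point of added rigor the paper leaves implicit.
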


\begin{proof}
  We prove the Theorem by constructing a suitable propagation graph
  such that the reciprocity condition is fulfilled. We first note that
  the set of transmitters, receivers, and scatterers is maintained for
  the reverse channel, thus the vertex set of $\tilde{\mathcal G}$ is
  $ \mathcal V$ with $\tilde{\mathcal V}_{\mathrm t} = \mathcal
  V_{\mathrm r}$, $\tilde{\mathcal V}_{\mathrm r} = \mathcal
  V_{\mathrm t}$, and $\tilde{\mathcal V}_{\mathrm s} = \mathcal
  V_{\mathrm s}$.  It is immediately clear from the structure of
  $\mathcal G$ that the reverse graph $\tilde{\mathcal G}$ has no
  ingoing edges to $\tilde{\mathcal V}_{\mathrm t}$ and no outgoing
  edges from $\tilde{\mathcal V}_{\mathrm r} $ and $\tilde{\mathcal
    G}$ is thus a propagation graph. Assuming the vertex indexing as
  in \eqref{eq:11}, the weighted adjacency matrix of $\tilde{\mathcal
    G}$ is of the form
\begin{equation}
\label{eq:36}
\tilde{\mat A}(f) =   \begin{bmatrix}
    \mat 0 & \tilde {\mat D}(f) &\tilde {\mat T}(f)\\
    \mat 0 & \mat 0 &\mat 0\\
    \mat 0 & \tilde {\mat R}(f)  &\tilde {\mat B}(f)
  \end{bmatrix} ,
\end{equation}
where the transfer matrices $\tilde{\mat D}(f),\tilde{\mat
  R}(f),\tilde{\mat T}(f),$ and $\tilde{\mat B}(f)$ are defined
according to Fig.~\ref{fig:blockDiagramReverse}. Equating $\tilde{\mat
  A}(f) $ and $\mat A(f)$ we obtain the identities $\tilde {\mat D}(f) = \mat
D\transpose (f), \tilde {\mat B}(f) = \mat B\transpose (f), \tilde
{\mat T}(f) = \mat R\transpose (f), $ and $ \tilde {\mat R}(f) = \mat
T\transpose (f)$. The relation between $\tilde{\mathcal E}$ and
$\mathcal E$ now follows from the definition of the weighted adjacency
matrix. The input-output relation of the reverse channel
reads $ \tilde {\vec Y}(f) = \tilde {\mat H}(f) \tilde {\vec X}(f) $
where $\tilde {\vec X}(f)$ is the signal vector emitted by the
vertices in $\tilde{\mathcal V}_{\mathrm t}$ and $\tilde {\vec Y}(f)$
is the signal vector received by the vertices in $\tilde{\mathcal
  V}_{\mathrm t}$. 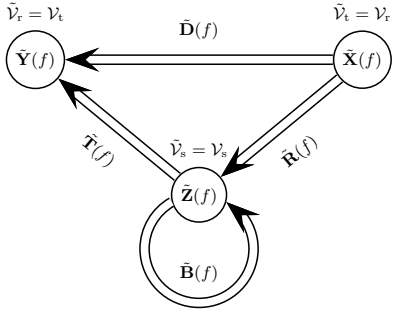
\begin{figure} \centering
% \psset{unit=0.5cm}
\resizebox{0.9\columnwidth}{!}{
\begin{pspicture}(-1,-2)(10,4.5)
\small
\cnodeput(1.5,3){vt}{$\tilde{\vec Y}(f)$}
\cnodeput(7.5,3){vr}{$\tilde{\vec X}(f)$}
\cnodeput(4.5,0.5){vs}{$\tilde{\vec Z}(f)$}
\nput{90}{vt}{$\tilde{\mathcal V}_{\mathrm{r}} = \mathcal V_{\mathrm{t}}$}
\nput{90}{vr}{$\tilde{\mathcal V}_{\mathrm{t}} = \mathcal V_{\mathrm{r}}$}
\nput{90}{vs}{$\tilde{\mathcal V}_{\mathrm{s}} = \mathcal V_{\mathrm{s}} $}
\ncline[doubleline=true,doublesep=5\pslinewidth,arrows=<-]{vt}{vr}
\naput[nrot=:U]{$\tilde{\mat D}(f)$}
\ncline[doubleline=true,doublesep=5\pslinewidth,arrows=<-]{vt}{vs}
\nbput[nrot=:U]{$\tilde{\mat T}(f)$}
\ncline[doubleline=true,doublesep=5\pslinewidth,arrows=<-]{vs}{vr}
\nbput[nrot=:U]{$\tilde{\mat R}(f)$}
\nccircle[angleA=-180,doubleline=true,doublesep=5\pslinewidth,arrows=->]{vs}{1}
\naput{$\tilde{\mat B}(f)$}
\end{pspicture}}
  \caption{Vector signal flow graph representation of a reverse
    propagation graph. Compared to the forward graph depicted in
    Fig.~\ref{fig:blockDiagram} all edges are reversed.}
  \label{fig:blockDiagramReverse}
\end{figure} Considering Fig.~\ref{fig:blockDiagramReverse} and arguing
as in Section \ref{sec:transf-funct-recurs} yields for the reverse
channel
\begin{align}
  \label{eq:30}
\tilde{\mat H}(f) &= \tilde{\mat D} (f) + 
\tilde{ \mat R} (f) 
\left[ \mat I  - \tilde{\mat
   B} (f) \right]^{-1}\tilde{\mat T} (f)
\end{align}
Inserting for $\tilde{\mat D}(f),\tilde{\mat
  R}(f),\tilde{\mat T}(f),$ and $\tilde{\mat B}(f)$ leads to 
\begin{align}
\tilde{\mat H}(f) &= \mat D\transpose (f) + \mat T\transpose (f) \left[ \mat I  - \mat
   B\transpose (f) \right]^{-1}\mat R\transpose (f) \nonumber \\
&= \mat H\transpose
 (f),
\end{align}
where the last equality results from \eqref{eq:60}.
% Comparing \eqref{eq:30} and \eqref{eq:35} it is clear that $ \tilde
% {\mat D}(f) = \mat D\transpose (f), \tilde {\mat B}(f) = \mat
% B\transpose (f), \tilde {\mat T}(f) = \mat R\transpose (f), $ and $
% \tilde {\mat R}(f) = \mat T\transpose (f)$. After inserting these four
% identities into \eqref{eq:36} we obtain $\tilde{\mat A}(f) = \mat
% A\transpose(f)$. 
\end{proof}

In words Theorem~\ref{thm:recipr-prop-graphs} says that for a
propagation graph there exists a reverse graph which fulfills the
reciprocity condition. Furthermore, an example of a propagation graph fulfilling
the reciprocity condition is the reverse graph $\tilde{\mathcal G}$
obtained by reversing all edges of $\mathcal G$ while maintaining the
edge transfer functions.

\subsection{Related Recursive Scattering Models}
We provide a few examples of recursive models to assist the
reader in recognizing models which can be represented by the graphical
structure.

In \cite{Shi2007a} Shi and Nehorai consider a model for recursive
scattering between point scatterers in a homogeneous background.  The
propagation between any point in space is described by a scalar
Green's function.
% The model \cite{Shi2007a} was considered  under recursive scattering conditions. The
% scenario consists of point scatterers in a homogeneous background
% medium where the propagation between any point in space is described
% by a scalar Green's function. 
% The model \cite{Shi2007a} was considered in the context of
% time-reversal imaging under recursive scattering conditions. The
% scenario consists of point scatterers in a homogeneous background
% medium where the propagation between any point in space is described
% by a scalar Green's function.
 The transfer function obtained by
applying the Foldy-Lax equation can also be obtained from a
propagation graph by defining the sub-matrices of $\mat A(f)$ as
follows. The model does not include a directed term and thus $\mat
D(f) = \mat 0$. The entry of $[\mat T(f)]_{m_1n}$ is the 
Green's function from transmit vertex $m_1$ to scatterer $n$ times the
scattering coefficient of scatterer $n'$. Similarly, the entry $[\mat
R(f)]_{nm_2}$ is the Green's function from the position of
scatterer $n$ to receiver $m_2$. The entry $[\mat B(f)]_{n n'}, n\neq
n'$ is the Green's function from the position of scatterer $n$ to the
position of scatterer $n'$ times the scattering coefficient of
scatterer $n$. Since a point scatterer does not scatter back on
itself, the diagonal entries of $\mat B(f)$ are all zero.  As can be
observed from the above definitions, the assumption of homogeneous
background medium leads to the special case with $\mathcal E_{\mathrm
  d} = \emptyset$, $\mathcal E_{\mathrm t} = \mathcal V_{\mathrm
  t}\times \mathcal V_{\mathrm s}$, $\mathcal E_{\mathrm r} =
\mathcal V_{\mathrm s}\times \mathcal V_{\mathrm r}$, and $\mathcal
E_{\mathrm s} = \mathcal V_{\mathrm s}\times \mathcal V_{\mathrm s}$.

Another modeling method that can be conveniently described using
propagation graphs is (time-dependent) radiosity
\cite{Nosal2004,Hodgson2006,Siltanen2007,Muehleisen2009,Rougeron2002}.
In these methods, the surfaces of the objects in the propagation
environment are first divided in smaller patches, then the power
transfer between pairs of patches is defined in terms of the so-called
``delay-dependent form factor'', and finally the power transfer from
the transmitter to the receiver is estimated. The delay-dependent form
factor is essentially a power impulse response between patches. It
appears that for these algorithms no closed-form solution feasible for
numerical evaluation is available in the literature. Thus
\cite{Nosal2004,Hodgson2006,Siltanen2007,Muehleisen2009,Rougeron2002}
resort to iterative solutions which can be achieved after discretizing
the inter-patch propagation delays. The time-dependent radiosity
problem can be expressed in the Fourier domain in terms of a
propagation graph where each patch is represented by a scatterer,
while the entries of $\mat A(f)$ are defined according to delay
dependent form factor. Using this formulation, a closed form
expression of the channel transfer matrix appears immediately by
Theorem~\ref{thm:transf-funct-recurs} with no need for quantization of
propagation delays.

% Since the model \cite{Shi2007a} is based on the Foldy-Lax equations
% assuming point scatterers in a homogeneous background medium, it is
% not applicable for a heterogeneous medium such as the indoor
% propagation environment where not all scatterers are mutually
% visible.

\subsection{Revisiting Existing Stochastic Radio
  Channel Models}
It is interesting to revisit existing radio channel models by means of
the just defined framework of propagation graphs. Such an effort may
reveal some structural differences between models, which are not
apparent merely from the mathematical formulation. { It is, however, a
  fact that the interpretation of a transfer function as a propagation
  graph is not unique---it is straightforward to construct different
  propagation graphs which yield the same transfer matrix. Thus
  different propagation graphs may be associated to the same radio
  channel model. In the sequel we construct graphs for two well-known
  stochastic channel models, i.e., the model by Turin \emph{et al.\ }
  \cite{turin} and the model by Saleh and Valenzuela \cite{saleh}. We
  include here only enough detail to allow for the discussion of the
  structure of the associated graphs and refer the reader to the
  original papers for the full description.

The seminal model \cite{turin} by Turin \emph{et al.}\ can be
expressed in the frequency domain as
\begin{equation}
  \label{eq:26}
  H(f) = \sum_{\ell=0}^{\infty}\alpha_\ell
  \exp(-j2\pi f \tau_\ell),
\end{equation}
where $\alpha_\ell$ is the complex gain and $\tau_\ell$ denotes the
delay of component $\ell$.  Apart from the direct component, to which
we assign index $\ell =0$, the index $\ell$ is merely a dummy index which
does not indicate any ordering of terms in \eqref{eq:26}. The set,
$\{(\tau_\ell,\alpha_\ell): \ell =1, 2, \dots\}$ is a marked Poisson
point process on $[\tau_0,\infty)$ with complex marks $\{\alpha_\ell:
\ell =1, 2, \dots \}$.

The propagation graph we seek should contain a propagation path for
each term in the sum \eqref{eq:26}. We notice that modifying the term
$\alpha_\ell \exp(-j2\pi f \tau_\ell)$ has no effect on the other
terms in \eqref{eq:26}.  This structure can be captured by
constructing a propagation graph in which modifying vertices or edges
in a propagation path leaves others unchanged.  Graph
theory offers the convenient notion of ``independent walks''
\cite{Diestel2000}: Two or more walks are called independent if none
of them contains an inner vertex of another. In particular,
propagation paths are independent if they do not traverse the same
scatterers. Thus for independent propagation paths $\ell, \ell'$,
changing path $\ell$ by changing its edge transfer functions or by
deleting edges from it, has no effect on path $\ell'$. We therefore
associate to each term in \eqref{eq:26} an independent propagation
path. Assuming for simplicity all paths to be single bounce paths this
results in the graph shown in Fig.~\ref{fig:turinSalehValenzuela}(a).

\begin{figure}
  \centering
\psset{unit=1cm}

\begin{pspicture}(-1.3,1)(4.3,3.2)
\psset{gridcolor=yellow}
\psset{subgridcolor=yellow}
\psset{linewidth=0.5pt}
\small
\rput[l]{0}(-1,3){\scriptsize a) Turin \emph{et al.}}
\rput[c]{0}(0,2){\dotnode(0,0){Tx}\nput[labelsep=0.1]{180}{Tx}{Tx}}
\rput[c]{0}(4,2){\dotnode(0,0){Rx}\nput[labelsep=0.1]{0}{Rx}{Rx}}
\rput[c]{0}(2,2.7){\dotnode(0,0){s1}}
\rput[c]{0}(2,2){\dotnode(0,0){s2}}
\rput[c]{0}(2,1.3){\dotnode(0,0){s3}}
\nput{-90}{s3}{\psdots[dotscale=0.25](0,0.1)(0,0)(0,-0.1)}
\ncline[nodesepA=0.1, nodesepB=0.1]{cc->}{Tx}{s1}\lput{:U}{}
\ncline[nodesepA=0.1, nodesepB=0.1]{cc->}{Tx}{s2}\lput{:U}{}
\ncline[nodesepA=0.1, nodesepB=0.1]{cc->}{Tx}{s3}\lput{:U}{}
\ncline[nodesepA=0.1, nodesepB=0.1]{cc->}{s1}{Rx}\lput{:U}{}
\ncline[nodesepA=0.1, nodesepB=0.1]{cc->}{s2}{Rx}\lput{:U}{}
\ncline[nodesepA=0.1, nodesepB=0.1]{cc->}{s3}{Rx}\lput{:U}{}
\end{pspicture}
\strut
\\[1.1cm]
\psset{unit=1cm}
\begin{pspicture}(-1.3,0.5)(4.3,3.5)
\psset{gridcolor=yellow}
\psset{subgridcolor=yellow}
\psset{linewidth=0.5pt}
%\psgrid(0,0)(4,4)
\small
\rput[l]{0}(-1,3.7){\scriptsize b) Saleh \& Valenzuela }
\rput[c]{0}(0,2){\dotnode(0,0){Tx}\nput[labelsep=0.1]{180}{Tx}{Tx}}
\rput[c]{0}(4,2){\dotnode(0,0){Rx}\nput[labelsep=0.1]{0}{Rx}{Rx}}

\rput[c]{0}(1,3){
  \dotnode(0,0){s1}
  \dotnode(1,-0.5){s11}
  \dotnode(1,0){s12}
  \dotnode(1,0.5){s13}}
\rput[c]{0}(1,1){
  \dotnode(0,0){s2}
  \dotnode(1,-0.5){s21}
  \dotnode(1,0){s22}
  \dotnode(1,0.5){s23}}

\nput{-90}{s21}{\psdots[dotscale=0.25](0,0.1)(0,0)(0,-0.1)}
\nput{-90}{s11}{\psdots[dotscale=0.25](0,0.1)(0,0)(0,-0.1)}
\nput{-90}{s1}{\psdots[dotscale=0.25](0,0.1)(0,0)(0,-0.1)}
\nput{-90}{s2}{\psdots[dotscale=0.25](0,0.1)(0,0)(0,-0.1)}

\ncline[nodesepA=0.1, nodesepB=0.1]{cc->}{Tx}{s1}\lput{:U}{}
\ncline[nodesepA=0.1, nodesepB=0.1]{cc->}{Tx}{s2}\lput{:U}{}
\ncline[nodesepA=0.1, nodesepB=0.1]{cc->}{s1}{s11}\lput{:U}{}
\ncline[nodesepA=0.1, nodesepB=0.1]{cc->}{s1}{s12}\lput{:U}{}
\ncline[nodesepA=0.1, nodesepB=0.1]{cc->}{s1}{s13}\lput{:U}{}
\ncline[nodesepA=0.1, nodesepB=0.2]{cc->}{s11}{Rx}\lput{:U}{}
\ncline[nodesepA=0.1, nodesepB=0.2]{cc->}{s12}{Rx}\lput{:U}{}
\ncline[nodesepA=0.1, nodesepB=0.2]{cc->}{s13}{Rx}\lput{:U}{}
\ncline[nodesepA=0.1, nodesepB=0.1]{cc->}{s2}{s21}\lput{:U}{}
\ncline[nodesepA=0.1, nodesepB=0.1]{cc->}{s2}{s22}\lput{:U}{}
\ncline[nodesepA=0.1, nodesepB=0.1]{cc->}{s2}{s23}\lput{:U}{}
\ncline[nodesepA=0.1, nodesepB=0.2]{cc->}{s21}{Rx}\lput{:U}{}
\ncline[nodesepA=0.1, nodesepB=0.2]{cc->}{s22}{Rx}\lput{:U}{}
\ncline[nodesepA=0.1, nodesepB=0.2]{cc->}{s23}{Rx}\lput{:U}{}

\end{pspicture}

\caption{Propagation graph representations of: a) a realization of the
  model by Turin \emph{et al.} \cite{turin}; and  b) a realization of the
  Saleh-Valenzuela model \cite{saleh}.
% ; c) a symmetric version of the
%   model in panel b).
}
  \label{fig:turinSalehValenzuela}
\end{figure}
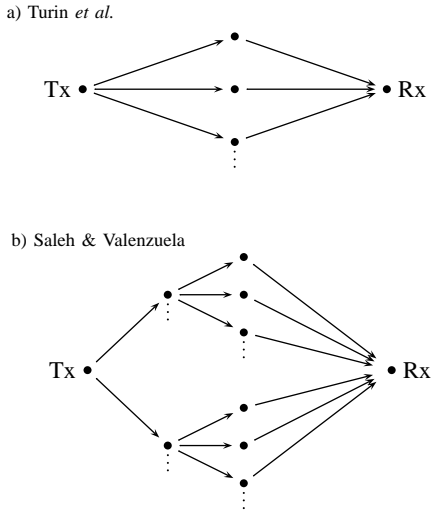

The model by Saleh and Valenzuela \cite{saleh} can be formulated as a
second-order Turin model:
\begin{equation}
  \label{eq:40}
  H(f) = \sum_{\ell=0}^{\infty} 
  \exp(-j2\pi f \tau_\ell)\sum_{\ell'=0}^{\infty} \alpha_{\ell\ell'} \exp(-j2\pi f \tau_{\ell\ell'}),
\end{equation}
where $\ell$ is a cluster index and $\ell'$ is a index
for the components within a cluster. The set $\{\tau_\ell: \ell =
1,2,\dots\}$ is a Poisson process on $[\tau_0,\infty)$ and 
% thus the
% occurrence and values of $ \exp(-j2\pi f \tau_\ell), \ell=1,2,\dots$
% are statistically independent. It can similarly be shown that,
conditioned on the cluster delays $\{\tau_\ell; \ell = 0,1,\dots \}$,
the family of sets $\{(\alpha_{\ell\ell'},\tau_{\ell\ell'}): \ell' =
1,2,\dots\}$, $\ell = 0,1,\dots,$ are independent marked Poisson
processes. Using a similar argument as for the Turin model,  leads to the
graph depicted in Fig.~\ref{fig:turinSalehValenzuela}(b).
% We note first that terms stemming from different clusters are
% independent.  Therefore propagation paths of different clusters
% should be independent.  Considering cluster $\ell$ only, conditioned
% on $\tau_\ell$ the factors $\{\alpha_{\ell\ell'} \exp(-j2\pi f
% \tau_{\ell\ell'}), \ell' = 0,1,\dots\}$ occur independently and have
% independent values. 
%
% The graph 
% depicted in
% Fig.~\ref{fig:turinSalehValenzuela}(b) is constructed to allow for
% this structure. 
The transmitter is connected to a set of
super-ordinate scatterers, each scatterer corresponding to a
``cluster''. These cluster-scatters are then connected to the
receiver via independent single-bounce paths passing
through sub-ordinate scatterers.  In this graph, deleting 
cluster scatterer $\ell$ makes a whole signal cluster $\ell$ disappear
in \eqref{eq:40}. Similarly, removing the sub-ordinate scatterer
$\ell'$ the component with index $\ell\ell'$ in \eqref{eq:40}
vanishes. 

% In the above we consider independence in the statistical sense and in
% graph theoretical sense. These two notions are unrelated, but the
% above construction, however, leads to a very intuitively pleasing
% connection: Statistically independent terms correspond to independent
% propagation paths and conversely, dependent terms correspond to
% dependent paths.

We end the discussion of the Saleh-Valenzuela model by mentioning that
many equivalent graphical interpretations may be given. As an example,
one may consider a graph structure as that of the reverse graph of
Fig.~\ref{fig:turinSalehValenzuela}(b). In such a structure, the
transmitter is directly connected to the sub-ordinate scatterers
whereas the receiver is connected to the clusters. Indeed by reversing
any number of clusters we obtain an equivalent propagation
graph. These graphs share the common property that they contain no
cycles.
% and hence the reverberation It
% can, however, be observed that propagation graph interpretations of the
% Saleh-Valenzuela model have the common denominator, that the resulting
% graphs are acyclic due to the correspondence between independent
% occurrence of signal components and independent propagation paths.  }
%This leads to an interesting effect in the reverse
% graph: After reversion of all edges the asymmetry is changed as the
% transmitting vertex is connected to the subordinate scatterers while
% the receiving vertex is connected to the cluster scatterers. This
% problem can be circumvented by making the graph symmetric as shown in
% Fig.~\ref{fig:turinSalehValenzuela}(c), which in turn necessitates
% additional scatterers and edges.

\section{Example: Stochastic Model for In-Room Channel} 
The concept of propagation graph introduced until now can be used for
describing a broad range of channel models. In this section we apply
these general results to a specific example scenario where scatterer
interactions are considered to be non-dispersive in delay. { This
  resembles the case where all wave interactions can be considered as
  specular reflections, or considering points scatterers.  The intent
  is to exemplify that the experimentally observed avalanche effect
  and diffuse tail can be explained using only scatterer interactions
  that by themselves yield no delay dispersion.}  We specify a method feasible for generating such a
graph in Monte Carlo simulations. The model discussed in this example
is a variant of the model proposed in
\cite{Pedersen2007,Pedersen2006}.

\subsection{Weighted Adjacency Matrix}
We define the weighted adjacency matrix according to a geometric model
of the environment. We consider a scenario with a single transmitter,
a single receiver, and $N_{\mathrm s}$ scatterers, i.e., the vertex set reads
$\mathcal V = \mathcal V_\mathrm{t} \cup \mathcal V_\mathrm{r} \cup
\mathcal V_\mathrm{s}$ with $ \mathcal V_\mathrm{t} = \{\mathrm{Tx}\},
\mathcal V_\mathrm{r} = \{\mathrm{Rx}\},$ and $ \mathcal V_\mathrm{s}
= \{\mathrm{S}1,\dots,\mathrm{S}N_{\mathrm s}\}. $ To each vertex $v\in \mathcal
V$ we assign a displacement vector $\vec r_v \in\mathbb{R}^3$ with
respect to a coordinate system with arbitrary origin.  To edge
$e=(v,v')$ we associate the Euclidean distance $d_e = \|\vec r_v - \vec r_{v'}
\|$, the gain $g_e$, the phase $\phi_e$, and the propagation delay
$\tau_e = d_e/c$, where $c$ is the speed of light.  The edge transfer
functions are defined as
\begin{equation}
  \label{eq:14}
  A_e(f) = 
  \begin{cases}
g_e(f) \exp j(\phi_e -2\pi \tau_e f); & e\in \mathcal E\\
0; &     e\not\in \mathcal E.\\
  \end{cases}
\end{equation}    
The edge gains $\{g_e(f)\}$ are defined according to 
\begin{equation}
  \label{eq:16}
  g^2_e(f) =
  \begin{cases}
    \frac{1}{(4\pi f \tau_{e})^2};& e\in \mathcal E_{\mathrm d} \\
        \frac{1}{4\pi f \mu(\mathcal E_{\mathrm t})}\cdot
        \frac{\tau_e^{-2}}{S(\mathcal E_{\mathrm t})}; &
 e\in\mathcal E_{\mathrm t} \\
    \frac{1}{4\pi f \mu(\mathcal E_{\mathrm r})}\cdot 
    \frac{\tau_e^{-2}}{S(\mathcal E_{\mathrm r})};  & e\in\mathcal E_{\mathrm r} \\
 \frac{g^2}{\mathrm{odi}(e)^2};   & e\in
\mathcal E_{\mathrm s}
  \end{cases}
\end{equation}
where $\mathrm{odi}(e)$ denotes the number of edges from
$\mathrm{init}(e)$ to other scatterers and for any $\mathcal E'\subseteq
\mathcal E$ 
\begin{equation}
  \label{eq:37}
  \mu(\mathcal E') = \frac{1}{|\mathcal E'|}\sum_{e\in\mathcal E'}
  \tau_e  \qquad \text{and} \qquad  S(\mathcal E') =
  \sum_{e\in\mathcal E'} \tau_e^{-2},
\end{equation} 
with $|\cdot |$ denoting cardinality.  The weight of the direct edge
is selected according to the Friis equation \cite{Friis1946} assuming
isotropic antennas at both ends. The weights of edges in $\mathcal
E_{\mathrm t}$ and $\mathcal E_{\mathrm r}$ also account for the
antenna characteristics. They are computed at the average distance to
avoid signal amplification when scatterers are close to a transmitter
or receiver, namely when the far-field assumption is
invalid. The value of inter-scatterer gain $g$ is for simplicity assumed
fixed. 

% { The definition of the
% weights of $\mathcal E_{\mathrm s}$ ensures that for $g^2<1$ the power
% emitted by a scatterer newer exceeds the power entering the vertex; the
% total power gain from a scatterer $v$ to other scatterers is
% $\sum_{(v,v')\in\mathcal V_{\mathrm s}} |A_{(v,v')}|^2 = g^2$ if $v$
% has outgoing edges to other scatterers and zero otherwise.\fixme{Check
%   the correctness of this relation}}

\subsection{Stochastic Generation of Propagation Graphs}
\label{sec:stoch-gener-prop}
We now define a stochastic model of the sets $\{\vec r_v\}$, $\mathcal
E$, and $\{\phi_e\}$ as well as a procedure to compute 
the corresponding transfer function and impulse response.
The vertex positions are assumed to reside in a bounded region $\mathcal
R\subset \mathbb R^3$ corresponding to the region of interest.  The
transmitter and receiver positions are assumed to be fixed, while the
positions of the $N_{\mathrm s}$ scatterers $\{\vec r_v: v\in \mathcal
V_s \}$ are drawn independently from a uniform distribution on $\mathcal R$.

Edges are drawn independently such that a vertex pair $e\in \mathcal
V\times\mathcal V$ is in the edge set $\mathcal E$ with probability $P_{e} =
\mathrm{Pr}[e\in \mathcal E] $ defined as
\begin{equation}
\label{eq:5}
    P_{e} = 
    \begin{cases}
      P_{\mathrm{dir}}, & e = (\mathrm{Tx},\mathrm{Rx})  \\
      0, &  \mathrm{term}(e)  = \mathrm{Tx} \\
      0, &  \mathrm{init}(e)  = \mathrm{Rx} \\
      0, &  \mathrm{init}(e) =\mathrm{term}(e) \\
      P_\mathrm{vis} & \text{otherwise}
    \end{cases}.
\end{equation}
The first case of \eqref{eq:5} controls the occurrence of a direct
component. If $P_\mathrm{dir}$ is zero, $ D(f)$ is
zero with probability one. If $P_\mathrm{dir}$ is unity, the direct
term $D(f)$ is non-zero with probability one.  The second and
third cases of \eqref{eq:5} exclude ingoing edges to the transmitter
and outgoing edges from the receiver. Thus the generated graphs will
have the structure defined in Section~\ref{subsec:propagation-graphs}.
The fourth case of \eqref{eq:5} excludes the occurrence of loops in
the graphs. This is sensible as a specular scatterer cannot scatter a
signal back to itself. A consequence of this choice is that any
realization of the graph is loopless and therefore $\mat A(f)$ has
zeros along its main diagonal.  The last case of \eqref{eq:5} assigns
a constant probability $P_{\mathrm {vis}}$ of the occurrence of edges
from $\mathcal V_t$ to $\mathcal V_s$, from $\mathcal V_s$ to
$\mathcal V_s$ and from $\mathcal V_s$ to $\mathcal V_r$.

% The weighted adjacency matrix $\mat A(f)$ can be rewritten as the
% entry-wise product of the (unweighted) adjacancy matrix $\mat A$ and a
% gain matrix $\mat G(f)$ containing the edge weights as $\mat A(f) =
% \mat A \odot \mat G(f)$. \fixme{Shorten this paragraph and place
%   elsewhere} Entry $(n,n')$ of the matrix $\mat A$ is unity
% if $(v_n,v_{n'})\in \mathcal E$ and zero otherwise. Thus generating
% the edge set according to the above definition, amounts to drawing the
% entries of $(N+2)\times(N+2)$ matrix $\mat A$ by drawing realizations
% of totally $(N+2)^2$ Bernoulli variables according to the
% probabilities given in \eqref{eq:5}. Of these $(N+2)^2$ Bernoulli
% variables, $3N$ are trivially zero due to the second, third and fourth
% cases of \eqref{eq:5}. Thus to generate $\mat A$, only
% $(N+2)^2-3N$ values actually need to be drawn.

Finally, the phases $\{\phi_e:e\in \mathcal E\}$ are drawn
independently from a uniform distribution on  $[0;2\pi)$.

Given the deterministic values of parameters $\mathcal R$, $\vec r_{\mathrm {Tx}}$, $\vec
r_{\mathrm {Rx}}$, $N_{\mathrm s}$, $P_{\mathrm {dir}}$, $P_{\mathrm {vis}}$ and
$g$, realizations of the (partial) transfer function $H_{K:L}(f)$ and
corresponding (partial) impulse response $h_{K:L}(\tau)$ can now be
generated for a preselected frequency range
$[f_{\mathrm{min}},f_{\mathrm{max}}]$, using the algorithm stated in
Fig.~\ref{fig:algorithm} and Appendix.
\begin{figure}
  \centering
\fbox{
  \begin{minipage}{0.95\columnwidth}
\small{
\begin{enumerate}
\item Draw   $\vec r_v,\ v\in \mathcal
   V_{\mathrm{s}}$ uniformly on $\mathcal R$\\[-1.5ex] 
 \item Generate  $\mathcal{E}$ according to \eqref{eq:5}
\\[-1.5ex] 
 \item Draw independent  phases $\{\phi_e: e\in \mathcal E \}$
   uniformly on $[0,2\pi)$\\[-1.5ex] 
 \item Compute $\mat A(f)$ within the frequency bandwidth using \eqref{eq:14}\\[-1.5ex] 
 \item IF spectral radius of $\mat B(f)$ is larger than unity for some frequency within
   the bandwidth THEN  GOTO step 1\\[-1.5ex] 
 \item Estimate $H_{K:L}(f)$ and $h_{K:L}(\tau)$  as described
   in Appendix~\ref{sec:numer-comp-transf} \\[-1.5ex] 
\end{enumerate}}
\end{minipage}
}
  \caption{Algorithm for generating full or partial transfer functions
    and  impulse responses for a preselected bandwidth. }
  \label{fig:algorithm}
\end{figure}

\subsection{Numerical Experiments}
The effect of the recursive scattering phenomenon can now be
illustrated by numerical experiments. The parameter settings given in
Table~\ref{tab:simSettings} are selected to mimic the experimental
setup of \cite{Kunisch2003} used to acquire the measurements reported
in Fig.~\ref{fig:kunischExample}. The room size and positions of the
transmitter and receiver are chosen as in \cite{Kunisch2003}. We
consider the case where direct propagation occurs and set
$P_{\mathrm{dir}}$ to unity. The probability of visibility
$P_{\mathrm{ vis}}$ and the number of scatterers $N_{\mathrm s}$ are chosen to
mimic the observed avalanche effect.  The value of $g$ is set to match
the tail slope $\rho \approx -0.4\, \deci\bel\per\nano\second$ of the
delay-power spectrum depicted in Fig.~\ref{fig:kunischExample}. The
value of $ g$ can be related to the slope $\rho$ of the log delay-power spectrum via the approximation $\rho \approx 20\log_{10}(g) /
\mu(\mathcal E_{\mathrm s})$. This approximation arises by considering
the power balance for a scatterer:  assuming  the signal components
arriving at the scatterer to be statistically independent, neglecting
the probability of scatterers with outdegree zero, and approximating
the delays of edges in $\mathcal E_{\mathrm s}$ by the average $
\mu(\mathcal E_{\mathrm s})$ defined in \eqref{eq:37}.

\begin{table}

 \centering
  \caption{Parameter Settings for the Numerical Examples}
  \label{tab:simSettings}
\scriptsize
\begin{tabular}[c]{lcc}

\toprule
Parameters & Symbol & Values\\ \midrule

Room size& $\mathcal{R}$ &   $[0,5]\times[0,5]\times[0,2.6] \,\cubic\meter$
\\
Transmitter position& $\vec r_{\mathrm{Tx}}$& $ [1.78,\, 1.0,\, 1.5]^\mathrm{T}\,\meter$\\
Receiver position& $\vec r_{\mathrm{Rx}}$& $ [4.18,\,  4.0,\, 1.5]^\mathrm{T}\,\meter$\\
Number of scatterers& $N_{\mathrm s}$ & 10\\
Tail slope & $\rho$ &--0.4\,\deci\bel\per\nano\second \\
Prob. of visibility& $P_\mathrm{vis}$& 0.8\\
Prob. of direct propagation& $P_\mathrm{dir}$& 1\\
Speed of light& $c$ & 3\cdot10\textsuperscript 8 \meter\per\second\\
Transmit signal& $X[m]$& Unit power Hann pulse \\
Number of frequency samples &$M $ & 8192\\ 
\bottomrule
  \end{tabular}
\end{table}

\begin{figure}
  \centering
\resizebox{0.8\columnwidth}{!}{
  \input{./figures/channelResponse.tex}
  \includegraphics{./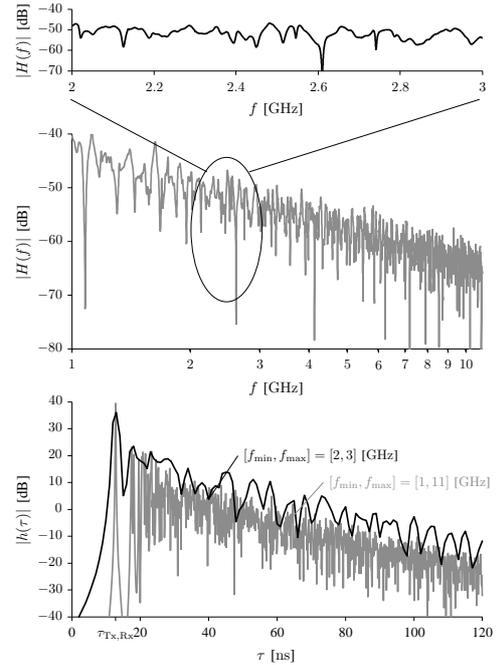}}
\caption{Channel response for a specific realization of the
  propagation graph. Top: Transfer function in dB
  $(20\log_{10}|H(f)|)$. Bottom: Impulse responses in dB
  $(20\log_{10}|h(\tau)|)$ computed for two frequency bandwidths.
  % (gray:
  % $[1,11]\,\giga\hertz$; black
  %  $[2,3]\,\giga\hertz$).
}
  \label{fig:channelResponse}
\end{figure}

Fig.~\ref{fig:channelResponse} shows the amplitude of a single
realization of the transfer function. Overall, the squared amplitude
of the transfer function decays as $f^{-2}$ due to the definition of
$\{g_e(f)\}$. Furthermore, the transfer function exhibits fast fading
over the considered frequency band. The lower panel of
Fig.~\ref{fig:channelResponse} reports the corresponding impulse
response for two different signal bandwidths.  Both impulse responses
exhibit an avalanche effect as well as a diffuse tail 
of which the power decays nearly exponentially with $\rho \approx -0.4\,
\deci\bel\per\nano\second$.  As anticipated, the transition to the
diffuse tail is most visible in the response obtained with the larger
bandwidth.

\begin{figure*}
    \centering 
    \resizebox{0.8\textwidth}{!}{
    \input{./figures/partialResponseOverview.tex}
    \includegraphics{./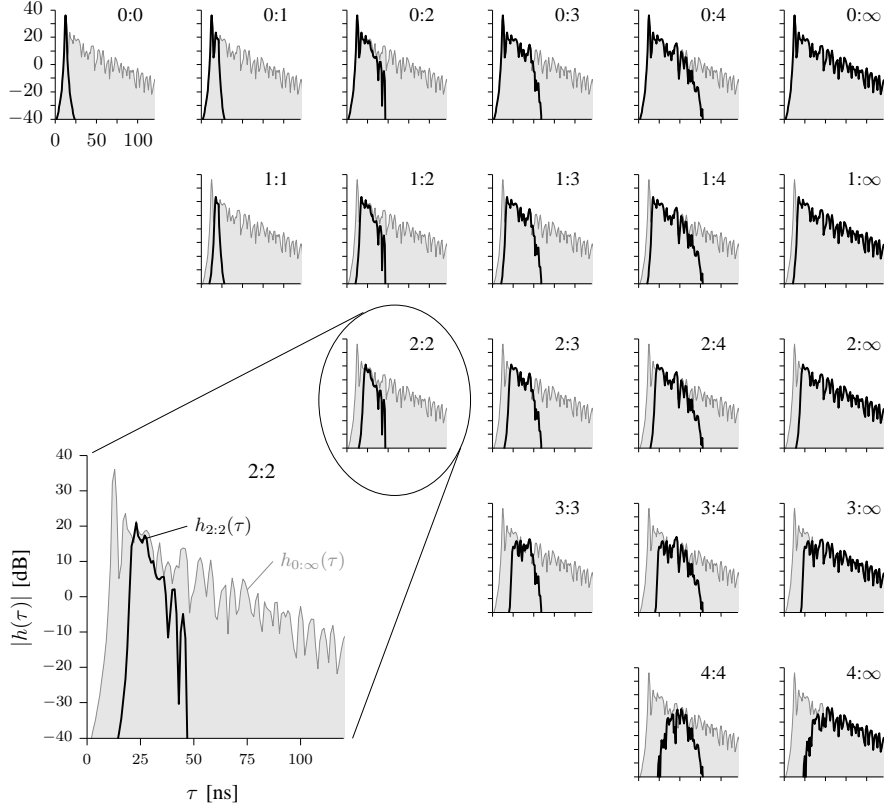}
    }
    \caption{Partial responses obtained for  one graph realization for
      the  bandwidth      $ [2,3]\,\giga\hertz$. The
      $K:L$ settings are indicated in each miniature. 
% The full
%       response is indicated in gray for comparison.
      Top row:
      $K$-bounce approximations. Right-most column: error terms
      resulting from $(K-1)$-bounce approximations. Main diagonal
      $(K=L)$: $K$-bounce contributions. }
    \label{fig:partialResponseOverview}
\end{figure*}

The build up of the impulse response can be examined via the partial
impulse responses given in Fig.~\ref{fig:partialResponseOverview}.
Inspection of the partial responses when $K=L$ reveals that the early
part of the tail is due to signal components with a low $K$ while the
late part is dominated by higher-order signal components. It can also be
noticed that as $K$ increases, the delay at which the maximum of the
$K$-bounce partial response occurs and the spread of this response are
increasing.

\begin{figure}
    \centering
    \resizebox{\columnwidth}{!}{
    \input{figures/delayPowerSpectra.tex}
    \includegraphics{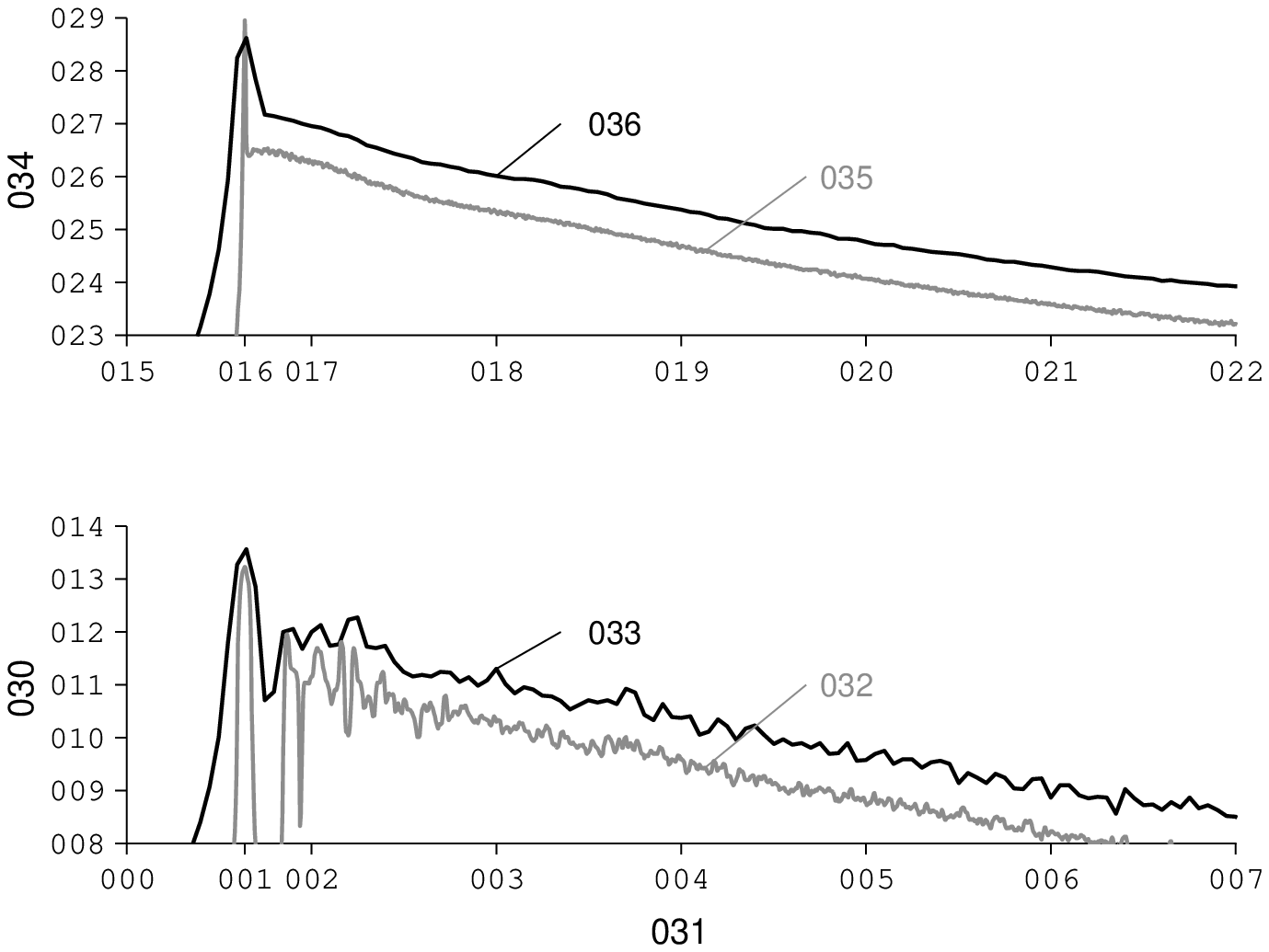}}
  \caption{Simulated delay-power spectra. % computed for two bandwidths.
% (in gray:
    % $[f_{\mathrm{min}},f_{\mathrm{max}}] = [1,11]\,\giga\hertz$, in
    % black $[f_{\mathrm{min}},f_{\mathrm{max}}] =
    % [2,3]\,\giga\hertz$). 
    Top panel: Ensemble average of 1000 Monte Carlo runs. Bottom
    panel: Spatial average of a single graph realization assuming the
    same grid as in Fig.~\ref{fig:kunischExample}, i.e.,
    900 receiver positions on a 30$\times$30 horizontal square grid
    with $1\times1\,\centi\meter\square{}$ mesh centered at position
    $\vec r_{\mathrm{Rx}}$ given in Table~\ref{tab:simSettings}.}
    \label{fig:delayPowerSpectra}
\end{figure}

Fig.~\ref{fig:delayPowerSpectra} shows two types of delay-power
spectra. The upper panel shows the ensemble average of squared
amplitudes of 1000 independently drawn propagation graphs for the two
signal bandwidths also considered in Fig.~\ref{fig:channelResponse}.
Both spectra exhibit the same trend: A clearly visible peak due to the
direct signal is followed by a tail with nearly exponential power decay. As
expected, the first peak is wider for the case with 1\,\giga\hertz{}
bandwidth than for the case with 10\,\giga\hertz{} bandwidth. The
tails differ by approximately 7~dB. This shift arises due to the
$f^{-2}$ trend of the transfer function resulting in a higher received
power for the lower frequencies considered in the 1\,\giga\hertz{}
bandwidth case.

The bottom panel shows spatially averaged delay-power spectra obtained
for one particular realization of the propagation graph.  The
simulated spatial averaged delay-power spectra exhibit the avalanche
effect similar to the one observed in Fig.~\ref{fig:kunischExample}.
For the 10\,\giga\hertz{} bandwidth case the power level of
the diffuse tails agrees remarkably well with
the measurement in Fig.~\ref{fig:kunischExample}. The modest deviation of
about 3~\deci\bel{} can be attributed to antenna losses in the measurement.
{
\section{Conclusions}
\label{sec:conclusion}
The outset for this work was the observation that the measured impulse
responses and delay-power spectra of in-room channels exhibit an
avalanche effect: Multipath components appear at increasing rate and
gradually merge into a diffuse tail with an exponential decay of
power.  We considered the question whether the avalanche effect is due
to recursive scattering. To this end we modelled the propagation
environment as a graph in which vertices represent transmitters,
receivers, and scatterers and edges represent propagation conditions
between vertices. From this general structure the graph's full and
partial frequency transfer matrices were derived in closed form. These
expressions can, by specifying the edge transfer functions, be
directly used to perform numerical simulations.

We considered as an example a graph-based stochastic model where all
interactions are non-dispersive in delay in a scenario similar to an
experimentally investigated scenario where the avalanche effect was
observed. The impulse responses generated from the model also exhibit an
avalanche effect. This numerical experiment lead to the observation
that the diffuse tail can be generated even when scatterer
interactions are non-dispersive in delay. Furthermore, the exponential
decay of the delay-power spectra can be explained by the presence of
recursive scattering. As illustrated by the simulation results the
proposed model, in contrast to existing models which treat dominant
and diffuse components separately, provides a unified account by
reproducing the avalanche effect.  }
% Thus, recursive scattering provides a plausible interpretation of the
% observed avalanche effect.

\appendix
%\subsection{Numerical Computation of Transfer Matrix and Impulse  Response}
\label{sec:numer-comp-transf}
The transfer function $\mat H(f)$ and impulse response $ \mat h (\tau)$ can be
estimated as follows:
\begin{enumerate}
\item Compute $M$ samples of the transfer matrix within the
  bandwidth $[f_{\mathrm  {min}},f_{\mathrm  {max}}] $
  \begin{equation}
    \label{eq:28}
\mat H[m] = \mat H(f_{\mathrm{min}}+
  m \Delta_f),\quad
  m =0 , 1, \dots  M-1,
  \end{equation}
  where $\Delta_f = {(f_{\mathrm {max}}-f_{\mathrm
      {min}})}/({M-1})$ and $\mat H(\cdot)$ is obtained using
  Theorem~\ref{thm:transf-funct-recurs}.
\item Estimate the received signal $\mat y(\tau)$ via the inverse
  discrete Fourier transform
\begin{multline}
  \label{eq:13}
  \mat  y(i\Delta_\tau) = \Delta_f\!\sum_{m=0}^{M-1} \mat
  H[m]
 \mat X[m] \exp(j2\pi i m/M), \\ i =0,\dots M-1, 
\nonumber
\end{multline}
where $\mat X [m] = \mat X(f_{\mathrm{min}}+m \Delta_f ),
  m =0 , 1, \dots  M-1 $ and 
$\Delta_\tau = 1/(f_{\mathrm {max}}-f_{\mathrm {min}})$.
\end{enumerate}
The impulse response can be estimated by letting $\mat X[f]$ be a
window function of unit power
\begin{equation}
  \label{eq:33}
\int^{f_{\mathrm {max}}}_{f_{\mathrm {min}}} |\mat X(f)|^2 \mathrm{d} f
\approx \sum_{m=0}^{M}| \mat X[m]]|^2\Delta_f = 1,
\end{equation}
where $\mat X[m] = \mat X(f_{\mathrm{min}}+m \Delta_f)$. The
window function must be chosen such that its inverse Fourier transform
exhibits a narrow main-lobe and sufficiently low side-lobes; $\mat
y(\tau)$ is then regarded as a good approximation of the impulse
response of the channel and by abuse of notation denoted by $\mat
h(\tau)$. Samples of the partial transfer matrix are obtained by
replacing $\mat H(\cdot ) $ by $\mat H_{K:L}(\cdot)$ in
\eqref{eq:28}. The corresponding received partial impulse response is
denoted by $\mat h_{K:L}(\tau)$.

\bibliographystyle{IEEEtran}
\bibliography{../../../referenceDataBase/referencedatabase}

\end{document}